\newcommand{\ceil}[1]{\lceil #1 \rceil}
\newtheorem{theorem}{Theorem}[section]
\newtheorem{lemma}[theorem]{Lemma}
\newtheorem{definition}[theorem]{Definition}
\newtheorem{corollary}[theorem]{Corollary}
\newcommand{\eps}{\varepsilon}
\newcommand{\opt}{OPT}
\newcommand{\poly}{{\rm poly}}
\newcommand{\dist}{{\rm dist}}
\newcommand{\diam}{{\rm diam}}
\newcommand{\conv}{{\rm conv}}
\newcommand{\guesses}{L}
\newcommand{\floor}[1]{\left\lfloor #1 \right\rfloor}
\newcommand{\init}{\sc{Init}}
\newcommand{\dynamickcenter}{\sc{Clustering}}
\newcommand{\dynamickcenterinsert}{\sc{ClusteringInsertion}}
\newcommand{\dynamickcenterdelete}{\sc{ClusteringDeletion}}
\newcommand{\cluster}{\mathfrak{C}}
\newcommand{\R}{\mathbb{R}}
\newcommand{\sampsize}{k \log L \log (n/\delta)}
\title{Dynamic Diameter in High-Dimensions against Adaptive Adversary and Beyond\thanks{The work is partially supported by DARPA QuICC, ONR MURI 2024 award on Algorithms, Learning, and Game Theory, Army-Research Laboratory (ARL) grant W911NF2410052, NSF AF:Small grants 2218678, 2114269, 2347322.}}
\author{%
  Kiarash Banihashem\thanks{Equal Contribution}\\
  University of Maryland\\
  College Park, MD, USA\\
  \texttt{kiarash@umd.edu} \\
  \And
  Jeff Giliberti\footnotemark[2]\\
  University of Maryland\\
  College Park, MD, USA\\
  \texttt{jeffgili@umd.edu} \\
  \And
  Samira Goudarzi\footnotemark[2]\\
  University of Maryland\\
  College Park, MD, USA\\
  \texttt{samirag@umd.edu} \\
  \And
  MohammadTaghi Hajiaghayi\footnotemark[2]\\
  University of Maryland\\
  College Park, MD, USA\\
  \texttt{hajiagha@umd.edu} \\
  \And
  Peyman Jabbarzade\footnotemark[2]\\
  University of Maryland\\
  College Park, MD, USA\\
  \texttt{peymanj@umd.edu} \\
  \And
  Morteza Monemizadeh\footnotemark[2]\\
  TU Eindhoven\\
  Eindhoven, The Netherlands\\
  \texttt{M.Monemizadeh@tue.nl} \\
}
\begin{document}

\maketitle

\begin{abstract}
In this paper, we study the fundamental problems of maintaining the diameter and a $k$-center clustering of a dynamic point set $P \subset \mathbb{R}^d$, where points may be inserted or deleted over time and the ambient dimension $d$ is not constant and may be high. Our focus is on designing algorithms that remain effective even in the presence of an \emph{adaptive adversary}—an adversary that, at any time $t$, knows the entire history of the algorithm’s outputs as well as all the random bits used by the algorithm up to that point.
We present a fully dynamic algorithm that maintains a $2$-approximate diameter with a \emph{worst-case} update time of $\poly(d, \log n)$, where $n$ is the length of the stream. 
Our result is achieved by identifying a robust representative of the dataset that requires infrequent updates, combined with a careful deamortization. 
To the best of our knowledge, this is the first efficient fully-dynamic algorithm for diameter in high dimensions that \emph{simultaneously} achieves a $2$-approximation guarantee and robustness against an adaptive adversary.
We also give an improved dynamic $(4+\epsilon)$-approximation algorithm for the $k$-center problem, also resilient to an adaptive adversary. Our clustering algorithm achieves an amortized update time of $k^{2.5} d \cdot \poly(\epsilon^{-1}, \log n)$, improving upon the amortized update time of $k^6 d \cdot \poly( \epsilon^{-1}, \log n)$ by Biabani et al. {\color{gray}{[NeurIPS'24].}}  
\end{abstract}

\section{Introduction}
Maintaining representative properties of a dynamic point set, such as its current diameter and $k$-center clusters, is a fundamental computing task. Given a set of points $P$ in a $d$-dimensional Euclidean space, with $P$ subject to insertions and deletions, our goal is to efficiently compute and maintain approximate diameter, minimum enclosing ball and $k$-center clustering. We are interested in algorithms that are robust against an adaptive adversary, that is, the algorithm performance is evaluated against an instance that adapts to the previous choices of the algorithm.

Approximating the \emph{diameter} of a dataset—the greatest pairwise distance—is a fundamental geometric operation with direct relevance to machine learning. It is commonly used in clustering to measure the spread of a cluster, where a smaller diameter indicates that data points within a cluster are more similar to each other. In hierarchical clustering algorithms, diameter thresholds are often used to determine whether clusters should be merged or split~\citep{Liu2012}. In outlier detection, points that substantially increase the diameter of a dataset  can be flagged as anomalies~\citep{Aggarwal2016}. Diameter also arises in approximate furthest neighbor queries~\citep{Pagh2015}, and in active learning, where it quantifies the uncertainty over hypotheses~\citep{ToshDasgupta2017}. 
These applications become particularly challenging in high-dimensional settings, where exact diameter computation is  expensive and geometric properties often degrade~\citep{Ind00,GIV01}, making standard distance-based methods less effective and harder to analyze~\citep{Indyk1998Approximate,DBLP:journals/toc/Har-PeledIM12}.

\emph{Minimum Enclosing Ball (MEB)} is closely related to diameter and plays a central role in high-dimensional data analysis, particularly in one-class SVMs~\citep{DBLP:journals/ml/TaxD04} and Support Vector Data Description (SVDD)~\citep{DBLP:conf/ciarp/FrandiGLNS10}. A MEB captures the spread of the data by enclosing it in the smallest possible hypersphere, whose center and radius define the classifier's parameters and decision boundary. In dual SVM formulations, solving the MEB corresponds to identifying support vectors that lie on the boundary of the ball. Fast and efficient MEB computation is thus crucial for scaling SVM training to large, high-dimensional datasets~\citep{DBLP:conf/soda/BadoiuC03,DBLP:journals/talg/Clarkson10,DBLP:journals/jacm/ClarksonHW12}, where the cost of exact geometric operations can become a bottleneck.

\paragraph{Diameter Problem.} 
Designing dynamic algorithms for approximating the diameter of a point set in Euclidean space is particularly challenging when facing an \emph{adaptive adversary}—one that reacts to the algorithm’s internal randomness or past outputs. A folklore (dynamic) $2$-approximation algorithm (e.g., \cite{CP14}) picks a random anchor point $p \in P$ and maintains its furthest neighbor throughout updates. This strategy has $O(d \log n)$ amortized update time, since the anchor is deleted with probability $1/n$ for $n = |P|$.

However, the algorithm is fragile: if the adversary deletes $p$, the guarantee fails, and the procedure must be restarted—potentially requiring $\Omega(n)$ time. Surprisingly, the update time of this is simple and vulnerable method remains the best known when faced with an adaptive adversary. Developing a robust, fully dynamic algorithm with any non-trivial approximation that has worst-case $\poly(d \log n)$ update time remains an important open problem. We summarize existing bounds in \cref{table:diam}.

\begin{table}[h]
\centering
\begin{tabular}{@{}ccccc@{}}
\toprule
\textbf{Approximation} & \textbf{Update Time} & \textbf{Guarantee} & \textbf{Adversary} & \textbf{Algorithm} \\
\midrule
$2$ & $O(d \log n)$ & Amortized & Oblivious & Folklore \\
$1 + \epsilon$ & $O(d \cdot n^{1/(1+\epsilon)^2})$ & Worst-case & Oblivious & \citep{Ind03} \\
$1 + \epsilon$ & $\epsilon^{-\Theta(d)}$ & Worst-case & Deterministic & \citep{Z08} \\
$\sqrt{2} + \epsilon$ & $O(d \log n)$ & Amortized & Oblivious & \citep{CP14} \\
\midrule
$2$ & $O(d^5 \log^3 n)$ & Worst-case & Adaptive & \textbf{This paper} \\
\bottomrule
\end{tabular}
\caption{State-of-the-art algorithms for fully-dynamic diameter.}
\label{table:diam}
\vspace{-0.8cm}
\end{table}

\paragraph{$k$-Center Clustering.} 
The $k$-center problem seeks $k$ representative points that minimize the maximum distance to any point in the dataset, and is a well-studied objective in clustering. In the static setting, a simple greedy algorithm achieves a $2$-approximation~\citep{Gon85, HS86}, which is NP-hard to improve for any constant factor~\citep{HN79}; in Euclidean spaces, this can be refined to 1.822~\citep{FG88, BE97}. Dynamic variants of $k$-center clustering have attracted significant interest (e.g., \cite{CCFM97, CGS18, BEF+23, LHG+24}), with a focus on balancing approximation quality, update time, and robustness. Similar to the diameter problem, key performance measures include: (1) the approximation ratio; (2) per-update time (amortized or worst-case); and (3) resilience to different adversary models (oblivious, adaptive, or deterministic). \cref{tab:kcenter} summarizes the state-of-the-art across these dimensions; see \cref{sec:rel_work} for additional discussion.

\begin{table}[t]
\centering
\begin{tabular}{@{}ccccc@{}}
\toprule
\textbf{Approximation} & \textbf{Update Time} & \textbf{Guarantee} & \textbf{Adversary} & \textbf{Algorithm} \\
\midrule
$2+\epsilon$ & $2^{O(d)} \cdot \widetilde{O}(1)$ & Worst-case & Deterministic  & \citep{GHL+21} \\
$2+\epsilon$ & $\widetilde{O}(k/\epsilon)$ & Amortized & Oblivious  & \citep{BEF+23} \\
$O(\min\{k, \frac{\log(n/k)}{\log \log n}\} )$ & $\widetilde{O}(k)$ & Amortized & Deterministic & \citep{BEF+23} \\
$O(1)$ & $\mathrm{poly}(n,k)$ & Worst-case & Deterministic & \citep{LHG+24} \\
$4+\epsilon$ & $\widetilde{O}(k^6)$ & Amortized & Adaptive & \citep{BHL+24} \\
\midrule
$4+\epsilon$ & $\mathbf{\widetilde{O}(k^{2.5})}$ & Amortized & Adaptive & \textbf{This paper} \\
\bottomrule
\end{tabular}
\caption{State-of-the-art algorithms for fully dynamic $k$-center clustering. The $\tilde O(\cdot)$ notation hides $d$ and $\poly \log (k,  n, \rho, \epsilon)$ factors, where $\rho$ denotes the spread ratio of the dataset. Our algorithm is for Euclidean spaces, while the other results are for general metric spaces. The algorithm of \citep{BHL+24} is developed for $k$-center with outliers.}
\label{tab:kcenter}
\vspace{-0.7cm}
\end{table}

Despite recent advances, the complexity of fully dynamic $k$-center clustering against an adaptive adversary remains poorly understood. In particular, there is a significant gap between the nearly optimal approximation and update time guarantees in the oblivious setting~\citep{BEF+23} and the recent $(4+\eps)$-approximation by \citep{BHL+24}, which incurs an amortized update time of $\tilde{O}(k^6)$—prohibitively large even for small $k$. Sacrificing update time further, \citep{GHL+21} achieve a $2+\eps$ approximation with update time exponential in $d$, which is practical only for small dimensions. More recently, \cite{LHG+24, FS25} obtained deterministic $O(1)$-approximation algorithms with polynomial update time in $n$.

\paragraph{Adaptive adversary.} Dynamic algorithms~\citep{DBLP:conf/stoc/OnakR10,DBLP:journals/siamcomp/BaswanaGS18,DBLP:conf/focs/AbrahamDKKP16} have commonly assumed an \emph{oblivious adversary}, that is, the adversary must choose in advance an instance against which the algorithm is evaluated. Unfortunately, it turns out that the assumption of an oblivious adversary is problematic in many natural scenarios. This is the case when a dynamic algorithm is used as a subroutine inside another algorithm, when a database update depends on a previous one, or when the algorithm is faced with adversarial attacks.  

In theoretical computer science and machine learning, several notions of \emph{adaptive adversary} have been studied extensively. We consider an adaptive adversary who has full knowledge of the algorithm and can see its random choices up to that point, but not future ones. This is a natural and a central notion of adversary: it is the strongest possible adversary against which a randomized algorithm can be evaluated. The adversary is able to decide future updates based on the past decisions of the algorithm, and its updates might force the algorithm to misbehave, by either producing an incorrect answer or taking a long time to run. This definition of adversary closely reflects the \emph{white-box adversarial model} in robust machine learning~\citep{IEM18, MMS+18}. 

In the white-box setting, an adversary has access to the internal structure of a machine learning model, including its parameters and training data, such as its parameters or training weights. This level of access makes the adversary particularly powerful: recent work has shown that even subtle input perturbations can mislead the model into making incorrect predictions (see~\citep{BCM+13, AEIK18} and references therein). As a result, developing algorithms that are robust to white-box adversarial attacks has become a central focus in robust machine learning~\citep{IEM18, MMS+18, TKP+18, KGB17, LCLS17}, as well as in dynamic~\citep{NS17, CN20, W20}, streaming~\citep{BY20, ABJ+22}, and other BigData settings~\citep{MNS08, BMSC17}.

The robustness of fully-dynamic algorithms against adaptive adversaries has recently received significant attention~\citep{NS17, W20, BEF+23, RSW22}. Similar notions have also been studied in the streaming model~\citep{ABJ+22, FW23, FJW24}. 
We emphasize that our notion of adaptive adversary is significantly stronger than the commonly studied model, where the adversary can observe the outputs of the algorithm over time but not its internal random choices. To distinguish the two, we refer to our stronger model as the \emph{randomness-adaptive adversary} and the weaker one as the \emph{output-adaptive adversary}. This distinction parallels the difference between white-box and black-box adversarial attacks in machine learning. 

A key distinction is that, under an output-adaptive adversary, one can often design algorithms that carefully manage dependencies between internal randomness and observable outputs, thus preventing information leakage. Such techniques, however, break apart when the adversary has visibility into the random choices themselves. For many problems---including diameter and $k$-center clustering---the gap between output-adaptive and randomness-adaptive adversaries remains poorly understood.

\subsection{Our Contribution}
We design a fully-dynamic algorithm for approximating the diameter in high-dimensional Euclidean spaces that is robust to an adaptive adversary. The same algorithm extends to maintaining a minimum enclosing ball (MEB) that encloses the given point set. 
Our main result is the following:

\begin{theorem}\label{thm:main-diam}
For the diameter and minimum enclosing ball (or $1$-center) problems in $d$ dimensions, there exists a fully-dynamic algorithm that achieves a $2$-approximation with success probability at least $1-\delta$. The algorithm works against a 
randomness-adaptive adversary and guarantees a worst-case update time of $O\bigl(d^5 \log^{1.5}(d) \log^{1.5}(n/\delta)\bigr),$
where $n$ is the length of the stream. 
\end{theorem}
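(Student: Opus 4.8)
The plan is to reduce the problem to maintaining a single robust \emph{representative point} $c$ that always lies in $\conv(P)$, together with its furthest distance in the current set, $r(c):=\max_{q\in P}\dist(c,q)$. Reporting $r(c)$ is then a $2$-approximation of both $\diam(P)$ and the MEB radius: for any $c\in\conv(P)$ one has $\dist(c,q)\le\diam(P)$ for every $q\in P$, so $r(c)\le\diam(P)$, while the triangle inequality applied to a diametral pair $a,b$ gives $\max(\dist(c,a),\dist(c,b))\ge\dist(a,b)/2=\diam(P)/2$; and $\diam(P)/2\le R_{\mathrm{MEB}}(P)\le\diam(P)$, so the ball $B(c,r(c))$ certifies the MEB as well. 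The value of this reduction is that, as long as $c$ does not move, $r(c)$ is just a maximum of distances to a fixed point and is maintainable in $O(d+\log n)$ worst-case time per update with a balanced search tree keyed by $\dist(c,\cdot)$ (the $O(d)$ term is a single distance evaluation on an insertion). Everything therefore reduces to designing $c$ so that it is simultaneously (i) valid ($c\in\conv(P)$, or within an $\eps$-fraction of $\diam$ of it) and (ii) changed only very rarely, even against an adversary who sees the entire state of the algorithm.

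First I would build $c$ from a random sample. Against a randomness-adaptive adversary, any representative pinned to only $\poly(d,\log n)$ points of $P$ can be destroyed in $\poly(d,\log n)$ deletions, which forces the $\Theta(nd)$ rebuild of the search tree far too often; so instead I would maintain a uniform sample $S\subseteq P$ of size $s=\Theta\!\big(nd/\poly(d,\log n)\big)$ and let $c$ be a point that is guaranteed to lie in $\conv(S)\subseteq\conv(P)$ — the centroid of $S$, or the center of a $(1+\eps)$-approximate minimum enclosing ball of $S$ obtained by a greedy coreset iteration, which costs $\poly(d)$ work (this is the source of the $d^{5}$-type factor). The key robustness claim is that $c$ stays valid for $\Omega(s)$ updates: if deleted-but-sampled points are kept as ``ghosts'' so that $c$ itself does not move between refreshes, then for $c=\mathrm{centroid}(S)$ one shows $c$ drifts out of $\conv(P_t)$ by at most a factor $\tfrac{|S\setminus P_t|}{|S|}$ times a diameter term, so $c$ remains within tolerance until a constant fraction of $S$ has been deleted. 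The success probability $1-\delta$ comes from a union bound over the at most $n$ refreshes, each using randomness the adversary committed its update before seeing.

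Second I would deamortize the refresh. A refresh — redraw $S$, recompute $c$ by the $\poly(d)$ iteration, and rebuild the $\dist(c,\cdot)$ tree in $O(nd+n\log n)$ — is triggered only once per $\Omega(s)$ updates, so its amortized cost is $O\!\big((nd+\poly(d)\,s)/s\big)$, which is $\poly(d,\log n)$ for $s=\Theta(nd/\poly(d,\log n))$; I would then spread each refresh as a background task over the following $\Theta(s)$ updates (answering queries from the previous $(c,\text{tree})$, logging the updates that arrive during the rebuild, and replaying that log onto the new structures before the switchover), which upgrades the amortized bound to a worst-case one, since within such a window the adversary can delete only $O(s)$ more sampled points and by the robustness bound $c$ is still valid. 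Together with the $O(d+\log n)$ per-update maintenance of $r(c)$ this gives the stated worst-case update time, and the same pair $(c,r(c))$ certifies the diameter and a minimum enclosing ball at once.

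The step I expect to be the main obstacle is the robustness analysis against an \emph{oscillating} adversary: one that repeatedly collapses the true diameter by deleting a few outliers and then re-inflates it by inserting new ones, which at first glance forces a refresh every few steps. The analysis must show this does not actually break $c$ — deleting-then-reinserting does not push $c$ out of $\conv(P_t)$ (the search tree tracks the current set exactly, and $c$ can only drift once a constant fraction of the \emph{sampled} points are genuinely gone), and one must carefully bound $c$'s drift in terms of the fraction of $S$ deleted while the diameter is simultaneously shrinking, choosing the refresh triggers (ghost fraction, and a constant-factor drop in $r(c)$ relative to its value at the last refresh) so that each refresh can be charged to $\Omega(s)$ distinct deletions. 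Pinning the approximation to a clean $2$ rather than $2+\eps$ further requires keeping $c$ honestly inside $\conv(P_t)$, which I would arrange by refreshing before the ghost fraction becomes non-negligible and folding the residual slack into the choice of $s$ and the schedule.
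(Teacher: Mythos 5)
There is a genuine gap at the heart of your argument: the robustness claim for your representative point is false. You take $c$ to be the centroid (or an approximate MEB center) of a sample $S$ and assert that $c$ stays valid until a constant fraction of $S$ is deleted, because its drift is bounded by $\tfrac{|S\setminus P_t|}{|S|}$ times a diameter term. But correctness of the $2$-approximation requires $c$ to be representative of the \emph{current} set $P_t$, i.e.\ within distance $\diam(P_t)$ of every surviving point, and your drift bound is measured against the \emph{old} diameter. These can differ arbitrarily: take $n-1$ coincident points and one far outlier; the centroid (and the MEB center) sits a constant fraction of the old diameter away from the cluster, so deleting the single outlier leaves $c$ far outside $\conv(P_t)$ relative to the new, tiny diameter, and $r(c)$ overestimates $\diam(P_t)$ by an unbounded factor. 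So the number of deletions needed to invalidate your $c$ is $O(1)$, not $\Omega(s)$, and this is exactly the oscillating-adversary scenario you flag as ``the main obstacle'' — it is not a technicality to be handled by tuning refresh triggers; it breaks the choice of $c$ itself, even against an oblivious adversary. Your premise that robustness must come from pinning $c$ to many sampled points (hence the huge sample $s=\Theta(nd/\poly(d,\log n))$) is also off: what matters is not how many points support $c$ but how \emph{deep} $c$ sits in the data.

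The missing idea, which is the paper's main one, is to quantify depth by Tukey depth and use a \emph{centerpoint}: a point whose every closed halfspace contains $\Omega(\alpha n)$ points of $P$. Such a point provably remains in the convex hull of $P'$ after any $t<\alpha n$ deletions (its depth drops by at most $t$), and any point of the convex hull is representative, giving a clean $2$-approximation for both diameter and MEB deterministically for the next $\Theta(n/d^2)$ updates — no ghosts, no drift-versus-shrinking-diameter analysis, and no dependence on which points the adversary targets, which is what makes the scheme withstand a randomness-adaptive adversary. The paper computes a $\tfrac{1}{3d^2}$-centerpoint in $O(d^7\log^3 d\log^3\varphi^{-1})$ time (Har-Peled--Jones), maintains $F(c,P)$ in a balanced BST, and deamortizes by starting the next centerpoint computation after $\tfrac{3}{4}\eps|P|$ updates so it finishes within the remaining $\tfrac{1}{4}\eps|P|$; randomness only affects whether each centerpoint computation succeeds, which the adversary cannot influence, and a union bound over at most $n$ computations gives the $1-\delta$ guarantee. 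Your deamortization and BST-maintenance scaffolding is essentially the same as the paper's and is fine; it is the choice and analysis of the representative point that needs to be replaced by the centerpoint/Tukey-depth argument (or some other certified-depth construction) for the proof to go through.
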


Our algorithm is the first to be resilient against a randomness-adaptive adversary, while achieving a non-trivial constant approximation in high dimensions. Previously, efficient algorithms were known only for small values of $d$ \citep{Z08}. 

We extend some techniques developed for the diameter problem to the $k$-center clustering problem for $k \geq 2$, resulting in the following 
result that improves the amortized update time $k^6 \cdot \poly(d, \eps^{-1}, \log n)$ of ~\citep{BHL+24}.

\begin{theorem}
    \label{thm:dynamic:kcenter}
Let $k \in \mathbb{N}$ and $0 < \eps \leq 1$. For the $k$-center problem in $d$ dimensions, there exists a fully-dynamic algorithm that achieves a $(4+\eps)$-approximation with success probability at least $1-\delta$. The algorithm works against a randomness-adaptive adversary and guarantees an amortized update time of 
$k^{2.5} d \cdot \poly(\log n, \eps^{-1}, \delta)$,
where $n$ is the length of the stream.
\end{theorem}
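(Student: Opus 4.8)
The plan is to reduce the fully-dynamic $k$-center problem to maintaining, at each of $L = O(\eps^{-1}\log(n\spread))$ geometrically spaced scales $r_1 < \dots < r_L$ covering the range of possible values of $\opt$, a candidate set of at most $k$ centers that is a valid $(4+\eps)$-approximate solution whenever the scale matches $\opt$; at query time we output the solution of the smallest scale whose candidate set has size at most $k$. By the bounded-spread assumption the range of relevant scales is known up to a $\poly\log\spread$ factor, so the $L$ levels can be fixed in advance and maintained in parallel. The target accuracy $4+\eps$ will arise as the product of two factor-$2$ losses: one because the maintained net can sit at a scale as large as $2\,\opt$, and one because, for robustness, we maintain only an \emph{approximate} net whose cells are represented by points within distance $O(r)$ of the true net point rather than by the true net point itself.

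For a single scale $r$, the structure is an adaptively-robust version of the classical ``clustering at scale $r$'': a set of at most $k$ representative points together with the partition of the current point set into cells, where each cell has radius $O(r)$ around its representative and distinct representatives are $\Omega(r)$ apart. Insertions are handled by assigning the new point to a nearby representative, or opening a new representative if none exists (declaring the scale infeasible once the count exceeds $k$). Deletions of non-representative points only shrink a cell; the delicate case is the deletion of a representative, which forces a re-covering of its cell, possibly promoting new representatives from among the surviving cell points. To control the cost of these repairs against a \emph{randomness-adaptive} adversary, we work with a uniform random sample $S$ of size $\Theta(\sampsize)$ drawn with \emph{fresh} randomness at the start of each epoch of $T$ updates: the representatives and the net are maintained with respect to $S$, so that only deletions falling inside $S$ trigger repairs, and each repair touches at most $|S|=O(\sampsize)$ points. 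The at most $T$ points inserted within an epoch are kept explicitly and folded into the partition directly.

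The key correctness claim is that the fresh sample $S$ yields a valid $(4+\eps)$-approximate clustering at the right scale for the \emph{entire} epoch, not merely at its start, even though the adversary sees $S$ immediately after it is drawn. The point is that $S$ is drawn independently of the whole history up to the epoch boundary, so one can afford a union bound over the bounded-complexity family of point sets reachable within $T$ adversarial deletions (together with the at most $T$ explicitly-tracked insertions); a VC-type / $\eps$-approximation argument over the family of balls then shows that, with probability $1-\delta/\poly(n)$, $S$ simultaneously induces a good net for all of them, and a union bound over the $\le n$ epochs and the $L$ scales gives overall success probability $1-\delta$. Making the statistical guarantee hold \emph{uniformly over all reachable instances} rather than pointwise is the \textbf{main obstacle}, and it is precisely what buys robustness against an adversary with full view of the randomness; a secondary obstacle is verifying that the in-epoch re-covering and center-opening steps never inflate the approximation beyond $4+\eps$ nor the center count beyond $k$.

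For the running time, I would balance the cost of the epoch rebuilds against the cost of the cluster repairs. Recomputing the candidate sets from $S$ at an epoch boundary is a Gonzalez-type sweep over $O(\sampsize)$ points against at most $k$ centers, costing $k^2 d\cdot\poly(\log n,\eps^{-1},\delta)$, which amortizes to a $1/T$ fraction of that per update; each in-epoch update costs $k d\cdot\poly(\log n)$ for routine nearest-representative bookkeeping plus, when the deleted point lies in $S$, a representative repair whose total cost over the epoch is bounded by a potential-function argument (each insertion adds a bounded amount of potential, each representative deletion releases it). Choosing $T$ to equalize the rebuild and repair contributions, together with the $\poly(\log n,\eps^{-1},\delta)$ overhead that absorbs the $L$ scales, the logarithmic factors in $\sampsize$, and the dynamic-sampling machinery for maintaining $S$ under updates, yields the claimed amortized bound of $k^{2.5} d\cdot\poly(\log n,\eps^{-1},\delta)$.
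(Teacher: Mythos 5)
Your high-level skeleton (parallel geometric guesses for $\opt$, fresh randomness that the adversary cannot retroactively invalidate, amortization of rebuilds) matches the paper, but the core mechanism you propose does not close the argument. The paper's robustness against a randomness-adaptive adversary comes from choosing each center to be \emph{provably dense}: an $(\ell,\beta)$-robust center with $\beta = \Omega(N_{i-1}/(k-i))$ points of the \emph{actual} point set within radius $2(1+\eps)^{\ell+1}$, so that even an adversary who knows the center must perform $\Omega(N_{i-1}/(k-i))$ deletions before the cluster must be rebuilt; the rebuild cost is then charged to exactly those deletions. Your proposal has no analogue of this per-center deletion budget. Instead you maintain a net with respect to a per-epoch sample $S$ and repair only when a deletion hits $S$, claiming each repair touches $O(|S|)$ points. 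This creates two concrete problems. First, correctness: a $k$-center solution must cover every current point, not just the sampled ones; a sample of size $\tilde O(k)$ can entirely miss small optimal clusters, and after a representative is deleted the non-sample points of its cell are no longer certified to lie within $O(r)$ of any surviving representative unless you rescan them -- but rescanning contradicts your ``repairs touch at most $|S|$ points'' accounting. (In the paper, coverage of all points is certified via the disjointedness/closeness invariants together with the remainder set $Q^{\ell}$, and forming a cluster really does scan the active point set, which is the dominant cost.) Second, the cost accounting: your potential argument charges repairs to insertions within the epoch, but the adversary deletes representatives that were \emph{not} recently inserted (they come from the epoch-start sample), so those repairs have nothing to pay for them; and re-covering a deleted representative's cell may require promoting several new representatives, threatening both the bound of $k$ centers and the $4+\eps$ factor.

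Your statistical step is fine in spirit -- a fresh sample plus a uniform (VC-type, over balls) guarantee is indeed immune to an adversary who sees the randomness, since the guarantee is a deterministic property of $(S, P_{t_0})$ and subsequent adaptive updates shift ball counts by at most the number of updates; this parallels how the paper argues the adversary cannot affect the success of each sampling step. But the claimed runtime does not follow from your balancing: an epoch rebuild of cost $k^2 d\cdot\poly$ amortized over $T$ updates would give at most $\tilde O(k^2 d)$, not $k^{2.5}$, which signals that the true dominant cost (scanning the $\Theta(N)$ active points to build clusters and robustness counters) is missing from your ledger. The paper's exponent $2.5$ arises from a specific trade-off you do not have: use $\tilde O(k-i)$ samples with pairwise counting when $k-i \le N_{i-1}^{2/3}$ (cost $\tilde O((k-i)^2)$), and only $\tilde O(\sqrt{k-i})$ samples with exact counting against the full active set otherwise (cost $\tilde O(\sqrt{k-i}\,N_{i-1})$), giving per-rebuild cost $\tilde O(d(k-i)^{1.5}N_{i-1})$, which amortizes to $\tilde O(dk^{2.5})$ precisely because each rebuild is paid for by the $\Omega(N_{i-1}/(k-i))$ deletions that the robust center forced the adversary to make. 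Without the robust-center notion and this sampling trade-off, neither the $(4+\eps)$ guarantee for the full point set nor the $k^{2.5}$ amortized bound is established by your argument.
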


\paragraph{Overview of our techniques.}
One of our main technical contributions is the development of algorithms that remain resilient to adversarial updates, even when the adversary has full knowledge of the current state of the algorithm. To withstand such an adversary, our algorithms cannot rely on random decisions; instead, their decisions must be provably robust. This requirement rules out many commonly used techniques in dynamic algorithms for achieving fast update times or black-box robustness. In particular, it eliminates the use of random decisions~\cite{CP14} (e.g., an adversary deletes a randomly chosen point from $P$ with probability $1/|P|$), as well as the use of differential privacy~\citep{DBLP:conf/iclr/CherapanamjeriS23,10.1145/3556972} that includes reporting noisy answers and obfuscating the state of the algorithm through multiple copies with varying randomness.

Although our dynamic algorithms are randomized, randomness is used solely to accelerate the decision-making process. This contrasts fundamentally with prior dynamic algorithms against \emph{oblivious adversaries}~\citep{Ind03,CP14,BEF+23}, where randomness is typically used to make decisions that adversaries cannot exploit.

For both the diameter and $k$-center problems, we quantify the robustness of a ``decision'' by the number of updates it can withstand. Intuitively, a decision requiring $O(T)$ time to compute is considered $\epsilon$-robust if it remains valid after applying $O(\epsilon T)$ changes to the underlying dataset, for some $0 < \epsilon < 1$. Beyond $O(\epsilon T)$ updates, the data may have changed significantly, making even a robust decision irrelevant. Thus, the algorithm must re-compute and establish a new ``robust'' decision capable of withstanding the subsequent updates. 
Our dynamic diameter and k-center algorithms both rely on this iterative approach.

In the case of the diameter problem, we compute an \emph{$\varepsilon$-robust representative point} (Definition~\ref{def:rob}) that continues to be representative of the surviving point set (i.e., points that are inserted but not deleted in the dynamic setting) even after $O(\varepsilon T)$ deletions. However, it is not immediately obvious whether such $\varepsilon$-robust representatives exist or whether they can be computed efficiently. The core novelty of our framework lies in showing that such points can indeed be constructed and maintained efficiently. The construction step relies on the notion of  \emph{centerpoint}~\citep{HJ20,CEM+93} (see Definition~\ref{def:centerpoint}). Informally, a centerpoint is a point that lies \emph{“deep”} within a point set. To our knowledge, this is the first application of centerpoints in dynamic algorithms, and we believe this technique has broader potential in developing dynamic and streaming algorithms resilient to adaptive adversaries.

For the $k$-center problem, the robustness of a cluster center $c$ is quantified by the number of nearby points—those within some distance $D$. Since any point lies within distance at most $OPT$ of an optimal center $c^*$, a previously opened center $c$ that still retains a nearby point (within distance $D$) can cluster all points assigned to $c^*$ using radius $D + 2OPT$. This idea is also used in prior work on $k$-center clustering~\citep{BHL+24, CASS16}. Our main contribution for this problem is a faster update time through a more careful random sampling, which we overview below.

To find robust centers, we sample $\widetilde{O}(k)$ points and, for each sampled point, measure the fraction of other sampled points that lie within a ball of radius $2\cdot OPT$ around it.  This identifies a sampled point that is robust with respect to the sample in total time $\widetilde{O}(k^2)$. This procedure is efficient when the number of clusters is relatively small.  If $k\le n^{2/3}$ the above sampling and pairwise coverage tests suffice, giving an overall cost of $\widetilde{O}(k^2)$.

When $k>n^{2/3}$, we reduce the sample size to $\widetilde{O}(\sqrt{k})$ and change the robustness test: rather than testing coverage among sampled points, we test for each sampled point the fraction of \emph{original} points in $P$ that lie within radius $2\cdot OPT$, incurring an overall cost of $\widetilde{O}(\sqrt{k}\,n)$. Two scenarios now cover all possibilities.  First, if the largest $\sqrt{k}$ clusters each contain at least $n/k$ points, then with high probability each such large cluster is represented in the sample; estimating robustness against the original dataset then finds robust centers from these clusters.  Second, if the remaining $k-\sqrt{k}$ clusters together contain at most $n/t$ points for some constant $t$ (e.g. $t=4$), then the large clusters account for at least $(1-1/t)n$ points and, by the pigeonhole principle, at least one large cluster must contain $\Omega\!\big(n/\sqrt{k}\big)$ points.  Sampling $\widetilde{O}(\sqrt{k})$ points uniformly at random then ensures (with high probability) that we include a point from this large cluster, and we can verify its robustness against the full dataset.

Finally, another technical ingredient is a de-amortization technique that converts the \emph{amortized} update time of a Monte Carlo algorithm~\citep{DBLP:books/crc/99/0001R99} into a \emph{worst-case} update time. We use it to obtain a worst-case guarantee for the diameter problem. We also note that a similar de-amortization framework (although not directly applicable to our problems/setting) was recently proposed by \citet{BFH21} for dynamic spanner and maximal matching problems.

\subsection{Preliminaries}
\label{sec:prelim}
In this section, we introduce some notation, the problems, and necessary definitions that will be used throughout the paper.

Let $P  \subseteq \mathbb{R}^d$ be a set of points in $d$ dimensions. Define $\dist(p, p')$ as the distance between two points $p,p' \in P$. Let $\dist(p, C) = \min_{c \in C} \dist(p, c)$ be the minimum distance between $p \in P$ and $C \subseteq P$, and $F(c, P) = \max_{p \in P} \dist(c,p)$ be the furthest neighbor from a point $c \in P$. With a slight abuse of notation, we may use $F(c, P)$ to refer to $\arg \max_{p \in P} \dist(c,p)$ as well. Further, denote $B(p, r)$ as the ball of radius $r$ centered at $p$, $\text{MEB}(P)$ as the minimum enclosing ball of $P$, and $\conv(P)$ as the convex hull of $P$. 
Finally, let $n$ be the total length of the stream given in input. We remark that $n$ is used only for the analysis, our algorithms will not need to know $n$.

We are now ready to state the exact definition of the problems being studied.

\begin{definition}[Diameter Problem] 
Given a set of points $P$, the goal of the algorithm is to return an estimate diameter that is as close as possible to $\diam(P) = \max_{p, p' \in P} \dist(p, p')$.
\end{definition}

\begin{definition}[$k$-Center Problem] 
Given a set of points $P$, the goal of the algorithm is to choose a set of $k$ points $C \subseteq P$ such that $\max_{p \in P} \dist(p, C)$ is minimized.
\end{definition}

The $1$-center problem corresponds to the problem of minimum enclosing ball. For simplicity, we state the fully-dynamic version of the $k$-center problem below. The diameter problem is analogous with $OPT(P)$ being the diameter of $P$ instead.

\begin{definition}[Dynamic $\alpha$-Approximate $k$-Center against an Adaptive Adversary]\label{kcenter}
    Given a set of points $P$ and a sequence of $n$ updates (insertions and deletions), the goal of the algorithm is to return at any point in time an answer $\tilde d$ such that $\tilde d \in [OPT(P), \alpha \cdot  OPT(P)]$, for a constant $\alpha \ge 1$, with $OPT(P)$ being the radius of an optimal solution to the $k$-center problem. 
    
    The failure probability of the algorithm, i.e., the probability that at any point in time the algorithm returns a wrong answer, should be at most $\delta$, for arbitrary $\delta \in (0, 1)$. Moreover, the approximation factor and the update time should hold with a sequence of updates that are chosen by an adaptive adversary who sees the random bits used by the algorithm.
\end{definition}

\section{Robust representative of a point set}
As mentioned in the introduction, if we do not need adaptiveness, then we can obtain a $2$-approximation for the diameter by maintaining the maximum distance from a fixed point in the dataset. The key intuition here is that each point in the dataset is (approximately) representative of the entire set for the purposes of calculating the diameter.
Our first ingredient to develop our algorithms is a characterization of the property for a point $x \in \R^d$ of being \textit{representative} of the set $P$, even if $x \notin P$.

\begin{definition}[Representative Point]
\label{def:rep:point}
    A point $x \in \mathbb{R}^d$ is \emph{representative} of a point set $P$ if $\max_{y \in P} \dist(x,y) \le \diam(P)$.
\end{definition}

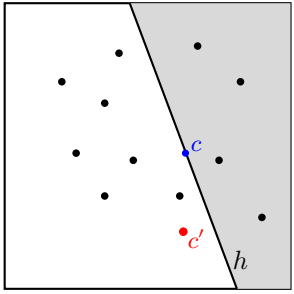
\begin{SCfigure}[2.0][h]
    \begin{tikzpicture}[scale=0.95]

\draw[thick] (0,0) rectangle (4,4);

\fill[gray!30] (4,4) -- (1.75,4) -- (3.25,0) -- (4,0) -- cycle;

\draw[thick] (3.25,0) -- (1.75,4);
\node at (3.3,0.4) {$h$};

\filldraw[blue] (2.53,1.9) circle (1.2pt);
\node[blue] at (2.65,2.0) {\footnotesize$\, c$};

\filldraw[red] (2.5,0.8) circle (1.5pt);
\node[red] at (2.6,0.7) {\footnotesize$\,\,\, c'$};

\foreach \x/\y in {
  3.6/1, 0.8/2.9, 1.6/3.3, 1.4/2.6,
  1.8/1.8, 3/1.8, 2.7/3.4, 3.3/2.9,
  1.0/1.9, 1.4/1.3, 2.45/1.3
}{
  \fill ( \x , \y ) circle (1.5pt);
}

\end{tikzpicture}
    \caption{Representativeness and robustness in a point set of size 12. The red point, $c'$, is representative but not robust. The blue point, $c$, not part of the dataset, is both representative and $\epsilon$-robust for $\epsilon = \nicefrac{1}{4}$. In fact, one can verify that any subset of 9 or more points from the original dataset will still contain $c$ in its convex hull. This is because $c$ has Tukey depth 4, i.e., every halfspace passing through $c$ contains at least 4 points. The shaded region indicates one such halfspace. (See \cref{sec:centerpoint} for corresponding definitions.)}
    \label{fig:robustpoint}
\end{SCfigure}

By definition, if we can maintain the maximum distance from a representative point, we always have an approximation for the diameter. In the dynamic setting however, the underlying set $P$ is going through updates and as such, any fixed point may not remain representative after some updates. Indeed, while any $x \in P$ satisfies the property of being representative, if the adversary removes $x$, we will need to choose a new point. 
To deal with this issue, we characterize the \emph{robustness} of a representative point with respect to the point set.

\begin{definition}[$\epsilon$-Robust Representative Point]
\label{def:rob}
    A point $x \in \mathbb{R}^d$ is an $\epsilon$-robustly representative of $P$ if
    $x$ is a representative point of
    all $P'$ that satisfies $|P' \cap P| > (1-\epsilon) |P|$, for an $\epsilon \in (0,1)$.
\end{definition}

In the above definition, the same point $x$ should be representative of every set $P'$ that include at least a $(1-\epsilon)$-fraction of points in $P$. Hence, $P'$ might also include an arbitrary number of points that are not in $P$. We refer to \cref{fig:robustpoint} for an illustration of these concepts.

The following lemma demonstrates the importance of an $\epsilon$-robust representative point. It essentially states that maintaining the maximum distance from a robust representative point gives a $2$-approximation of the diameter, even when the original set undergoes any number of insertions and up to $\floor{\epsilon |P|}$ deletions.

\begin{lemma}\label{lem:rob-apx}
    Assume that $x \in \mathbb{R}^d$ is an $\epsilon$-robustly representative of $P$.
    For any set $P'$ with
    $|P' \cap P| > (1-\epsilon) |P|$,
    we have
    $\diam(P')/2 \le \max_{y \in P'} \dist(x, y)
    \le
    \diam(P')$.
\end{lemma}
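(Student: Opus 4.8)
The plan is to prove the two inequalities separately, with the upper bound being essentially immediate from the definitions and the lower bound requiring a short triangle-inequality argument together with the fact that $x$ lies close to every point of $P'$.

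\medskip
\noindent\textbf{Upper bound.} First I would observe that since $x$ is $\epsilon$-robustly representative of $P$ and $P'$ satisfies $|P' \cap P| > (1-\epsilon)|P|$, Definition~\ref{def:rob} tells us directly that $x$ is a representative point of $P'$. By Definition~\ref{def:rep:point}, this means $\max_{y \in P'} \dist(x,y) \le \diam(P')$, which is exactly the right-hand inequality.

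\medskip
\noindent\textbf{Lower bound.} For the left-hand inequality, let $p, q \in P'$ be a pair of points achieving $\dist(p,q) = \diam(P')$. By the triangle inequality, $\dist(p,q) \le \dist(p,x) + \dist(x,q)$. Each of the two terms on the right is at most $\max_{y \in P'}\dist(x,y)$, so $\diam(P') \le 2 \max_{y \in P'}\dist(x,y)$, i.e. $\diam(P')/2 \le \max_{y \in P'}\dist(x,y)$. One small point to be careful about: this triangle-inequality step only uses that $p,q \in P'$ and does not require $x \in P'$, so it applies even though $x$ need not belong to the set; I would state this explicitly since the whole motivation for the robust-representative notion is that $x$ may lie outside $P$. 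I would also note the edge case $\diam(P') = 0$ (a single point or repeated points), where both inequalities hold trivially.

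\medskip
\noindent\textbf{Main obstacle.} Honestly there is no serious obstacle here — the lemma is a direct unpacking of Definitions~\ref{def:rep:point} and~\ref{def:rob} plus one triangle inequality. The only thing worth flagging is making sure the quantifier in Definition~\ref{def:rob} is applied to the correct set: the definition quantifies over \emph{all} $P'$ with $|P' \cap P| > (1-\epsilon)|P|$, so any particular such $P'$ (in particular the one in the lemma statement, which may contain arbitrarily many points outside $P$) inherits representativeness of $x$. Once that is in place, both bounds follow in one line each.
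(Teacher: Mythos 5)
Your proposal is correct and follows essentially the same argument as the paper: the upper bound is exactly the unpacking of Definitions~\ref{def:rep:point} and~\ref{def:rob} applied to $P'$, and the lower bound is the same triangle-inequality step through a diametral pair of $P'$. Your extra remarks (that $x$ need not lie in $P'$ and the trivial $\diam(P')=0$ case) are fine but not needed.
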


\subsection{Robust reprentatives and centerpoints} 
\label{sec:centerpoint}
While the definition of robust representatives is fairly natural given our problem, it is not clear whether such points exists and, importantly, whether they can be found efficiently.
In this section, we show a construction of robustly representative points using so-called centerpoints, points sufficiently \enquote{deep} in the data.
To formalize this, we need to introduce the notion of Tukey depth.

\begin{definition}[Tukey Depth \citep{T75}]
\label{def:tuk:depth}
    Given a set of $n$ points $P \in \R^{n \times d}$, the Tukey's depth of a point $x$ is the smallest number of points in any closed halfspace that contains $x$.
\end{definition}

It is a known fact that every set of points (with no duplicates) always has a point, which need not be one of the data points, of Tukey depth at least $n/(d+1)$, and this may be tight. Such a point can be thought of as a higher-dimensional median of the point set; in $d=1$ the median has in fact Tukey depth at least $n/2$. It is useful to relate the Tukey depth of a point with the size of the point set, as captured by the following definition.

\begin{definition}[$\alpha$-centerpoint]
    An $\alpha$-centerpoint $c$ of a point set of size $n$ is a point that has Tukey depth at least $\lceil \alpha n \rceil$, for any $\alpha \in [\frac{1}{n}, \frac{1}{d+1}]$.
    \label{def:centerpoint}
\end{definition}
To gain familiarity with the definition, the one-dimensional median is a $1/2$-centerpoint. In higher dimensions, an $\alpha$-centerpoint point essentially divides a set of points in two subsets such that the smaller part has at least an $\alpha$ fraction of the points.

There are several known algorithms to compute a centerpoint with runtime depending on the quality of the centerpoint and the number of dimensions. Ideally, we would like to find a $(\nicefrac{1}{d+1})$-centerpoint, but this requires $O(n^{d - 1} + n \log n)$ time by \citet{C04b}, which is claimed to be probably optimal. The work of \citet{CEM+93} shows that 
an approximate $1/\poly(d)$-centerpoint can be computed in $\poly(d)$ time with high probability. Their algorithm is based on iteratively replacing sets of $d+2$ points with their Radon point, which is any point that lies in the intersection of the convex hulls of these sets and is due to~\citet{R21}. The fastest known algorithm is due to \citep{HJ20} and has the following randomized guarantee.

\begin{theorem}[Theorem 3.9 of \citep{HJ20}]
\label{thm:centerpoint}
Given an arbitrary set $P$ of $n$ points in $\R^d$, there is a randomized Monte Carlo algorithm that computes a $\frac{1}{3d^2}$-centerpoint of $P$ in $O(d^7 \log^3 d \log^3 \varphi^{-1})$ time with probability at least $1 - \varphi$.
\end{theorem}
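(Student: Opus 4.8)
The plan is to reduce the problem to a tiny sub-instance and then construct a centerpoint by \emph{iterated Radon partitions}. \textbf{Step 1 (sample down).} Since the target running time must be independent of $n$, I would first draw a uniform sample $S\subseteq P$ of size $m=\poly(d,\log\varphi^{-1})$ and invoke the standard $\varepsilon$-approximation bound for halfspaces (which have VC-dimension $d+1$): for an appropriate polynomial $m$, with probability at least $1-\varphi/2$ every halfspace captures essentially the same fraction of $S$ as of $P$, up to an additive error of order $d^{-2}$. It then suffices to output, with probability $\ge 1-\varphi/2$, a point whose Tukey depth in $S$ is at least (a slightly larger constant times) $m/d^{2}$; such a point is automatically a $\tfrac{1}{3d^{2}}$-centerpoint of $P$.

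\textbf{Step 2 (the Radon building block).} The engine is Radon's theorem: any $d+2$ points in $\R^{d}$ split into two disjoint subsets $A,B$ with $\conv(A)\cap\conv(B)\ne\emptyset$; pick any point $r$ in that intersection, which costs a single affine-dependence linear solve, $O(d^{3})$ time. The one geometric fact I would prove is: \emph{if a closed halfspace $h$ contains $r$, then $h$ contains at least two of the $d+2$ points.} Indeed $r\in\conv(A)$ forces $h$ to contain a vertex of $A$ (a halfspace cannot contain a convex combination of points that all lie strictly on its far side), and likewise for $B$; since $A$ and $B$ are disjoint this yields two points. Equivalently, replacing $d+2$ points by their Radon point can only \emph{increase depth}, by "mixing" two of them into one.

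\textbf{Step 3 (iteration, recursion, union bound).} Starting from a pool of $m$ points sampled from $S$, run $\ell$ rounds; in each round every pool point is replaced by the Radon point of $d+2$ points drawn (near-)independently from the current pool. Fix a \emph{light} halfspace $h$, meaning $|h\cap S|<\tfrac{m}{3d^{2}}$, and let $p_{i}$ be the probability that a round-$i$ pool point lies in $h$. The fact from Step 2 gives the recursion $p_{i+1}\le\binom{d+2}{2}\,p_{i}^{2}$, so with $v_{i}=\binom{d+2}{2}p_{i}$ we get $v_{i+1}\le v_{i}^{2}$; since $p_{0}<\tfrac{1}{3d^{2}}$ makes $v_{0}<\tfrac12$, this is doubly-exponential decay, $p_{\ell}\le 2^{-2^{\ell}}$. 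Choosing $\ell$ to be polylogarithmic in $d$ and $\log\varphi^{-1}$ drives $p_{\ell}$ below $\varphi$ divided by the number of \emph{canonical} light halfspaces of $S$ — it is standard that the Tukey depth of a point is witnessed by the $O(m^{d})$ halfspaces whose bounding hyperplane is supported on $d$ sample points, so it suffices to check these. A union bound then shows that, with probability $\ge 1-\varphi/2$, no pool point, in particular the returned one, lies in any light halfspace, i.e., it is a $\tfrac{1}{3d^{2}}$-centerpoint of $S$ and hence of $P$. The running time is (number of rounds) $\times$ (pool size) $\times\,O(d^{3})$ for the Radon computations, plus the $O(md)$ to read $S$; for the appropriate polynomial choices of $m$ and $\ell$ this comes out to $O(d^{7}\log^{3}d\,\log^{3}\varphi^{-1})$, and the overall failure probability is at most $\varphi$.

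\textbf{The main obstacle} is probabilistic dependence. The recursion $p_{i+1}\le\binom{d+2}{2}p_{i}^{2}$ is immediate only when the $d+2$ inputs to each Radon step are fresh and independent — which is exactly what a full $(d+2)$-ary tree of height $\ell$ with i.i.d. leaves provides, but that tree has $(d+2)^{\ell}$ nodes, which is quasi-polynomial in $d$ and hence too slow. Reusing a bounded pool keeps the work polynomial but the pool becomes correlated after the first round, so the recursion holds only in expectation, with a Chernoff-type slack; I expect the real content of the proof to be showing this slack is harmless — for instance by conditioning on the event that the pool never over-concentrates on any light halfspace and controlling $\mathbf{E}\bigl[\,|h\cap\text{pool}_{i}|\,\bigr]$ round by round via Markov's inequality. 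The fact that the stated bound is only \emph{logarithmic} in $d$ strongly suggests that the doubly-exponential decay (hence a polylogarithmic round count) really is achieved, which is only possible if this dependence is controlled. Two smaller chores are enforcing general position by symbolic perturbation, so every Radon partition is well-defined, and bookkeeping the constants so that the final parameter is exactly $\tfrac{1}{3d^{2}}$ and the exponents land at $d^{7}$ and $\log^{3}$.
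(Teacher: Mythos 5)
First, a point of order: the paper does not prove this statement at all — Theorem~\ref{thm:centerpoint} is imported verbatim as Theorem~3.9 of \citep{HJ20} and used as a black box — so the comparison can only be against the proof in that reference, whose general framework (iterated Radon points on a random sample) your sketch correctly reconstructs. Your Step~1 (VC/$\eps$-approximation reduction to a sample of size $\poly(d,\log\varphi^{-1})$) and Step~2 (Radon's theorem, and the fact that a closed halfspace containing a Radon point must contain at least two of its $d+2$ defining points) are sound and match the lineage going back to \citep{CEM+93}.

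The genuine gap is exactly the step you flag and then defer. The recursion $p_{i+1}\le\binom{d+2}{2}p_i^{2}$ is valid only when each Radon step consumes $d+2$ \emph{fresh, independent} points; the independent tree realizing it has $(d+2)^{\ell}$ leaves, and with the $\ell=\Theta\bigl(\log(d\log m+\log\varphi^{-1})\bigr)$ rounds forced by your union bound over the $m^{O(d)}$ canonical halfspaces this is super-polynomial in $d$, so it cannot give the stated running time. Once you instead reuse a bounded pool, the squared-probability recursion does not follow from ``conditioning plus Markov'': bounding the expected number of pool points in a fixed light halfspace round by round yields at best geometric, not doubly exponential, decay, and any per-round high-probability statement would itself need a union bound over all light halfspaces, destroying the iteration. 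Resolving this dependence is precisely the technical content of \citep{HJ20}: they analyze a dependent replacement process (their ``Radon's urn''), in which points are repeatedly drawn from and returned to a shrinking pool, and they bound the probability that the final surviving point lies in a fixed light halfspace by a direct analysis of that correlated process rather than by an independence-based recursion; the exponents $d^{7}\log^{3}d\,\log^{3}\varphi^{-1}$ emerge from that analysis, not from fitting parameters to the target bound as in your last step. Two smaller issues: the returned point is a Radon point rather than a sample point, so reducing its Tukey depth to halfspaces spanned by $d$ sample points needs an explicit (standard) argument, and the general-position/perturbation remark should be checked not to affect depth counts. As written, then, the proposal outlines the right framework but leaves the theorem's hard part unproven.
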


We next show that an $\alpha$-centerpoint satisfies two key requirements: (i) it is representative, i.e., it leads to a $2$ approximation of the diameter, and (ii) it is robust, i.e., it continues to provide a $2$-approximation even after $\Omega(\alpha |P|)$ deletions. 
We begin by proving the following lemma which shows that any point in the convex hull is representative.
\begin{lemma}\label{lem:approx}
    Let $c$ be a point in the convex hull of $P$. Then, $F(c, P) \in [\frac{\diam(P)}{2}, \diam(P)]$.
\end{lemma}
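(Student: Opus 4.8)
The plan is to prove the two inequalities separately. The upper bound $F(c,P) \le \diam(P)$ is the easier direction. Since $c \in \conv(P)$, write $c = \sum_i \lambda_i p_i$ as a convex combination of points $p_i \in P$. For any fixed $y \in P$, the function $x \mapsto \dist(x,y)$ is convex (it is a norm composed with an affine map), so $\dist(c,y) \le \sum_i \lambda_i \dist(p_i, y) \le \max_i \dist(p_i,y) \le \diam(P)$, using that each $p_i, y \in P$. Taking the maximum over $y \in P$ gives $F(c,P) \le \diam(P)$. (Equivalently, one can invoke that $\conv(P)$ is contained in any ball $B(y,\diam(P))$ for $y \in P$, since such a ball is convex and contains $P$.)

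For the lower bound $F(c,P) \ge \diam(P)/2$, the idea is that $c$ cannot be far from \emph{every} point of $P$ without the points themselves being far from each other. Let $p, q \in P$ be a diametral pair, so $\dist(p,q) = \diam(P)$. By the triangle inequality, $\diam(P) = \dist(p,q) \le \dist(p,c) + \dist(c,q) \le 2 \max\{\dist(c,p), \dist(c,q)\} \le 2 F(c,P)$, since both $p$ and $q$ lie in $P$. Rearranging yields $F(c,P) \ge \diam(P)/2$. Note this direction actually holds for \emph{any} $c$ whose furthest-neighbor distance is measured against $P$ together with an appropriate containment of the diametral pair, and in particular it does not even require $c \in \conv(P)$; only the upper bound uses convexity.

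There is no real obstacle here — both bounds are one-line triangle-inequality / convexity arguments. The only point requiring a small amount of care is making sure the convex combination defining $c$ is finite (Carath\'eodory's theorem guarantees $c$ is a convex combination of at most $d+1$ points of $P$, though finiteness is enough, and if $P$ is finite this is immediate), and that we consistently measure distances to points of $P$ rather than of $\conv(P)$ in the definition of $F(c,P)$. Combining the two inequalities gives $F(c,P) \in [\diam(P)/2, \diam(P)]$, completing the proof.
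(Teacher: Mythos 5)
Your proof is correct, and for the upper bound it takes a genuinely different (and more elementary) route than the paper. The paper fixes the furthest point $p = F(c,P)$ and argues, via compactness of $\conv(P)$, the extreme value theorem, and the fact that a convex function attains its maximum over a compact convex set at a vertex, that some vertex $c' \in P$ satisfies $\dist(c',p) \ge \dist(c,p)$, whence $F(c,P) \le F(c',P) \le \diam(P)$. You instead bound the distance directly: writing $c = \sum_i \lambda_i p_i$ with $p_i \in P$ and using convexity of the norm, $\dist(c,y) \le \sum_i \lambda_i \dist(p_i,y) \le \diam(P)$ for every $y \in P$ (equivalently, $\conv(P) \subseteq B(y,\diam(P))$). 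Your version avoids the compactness/extreme-point machinery entirely and needs only Carath\'eodory (or finiteness of $P$), which is cleaner; what the paper's version buys is an explicit witness $c' \in P$ at least as far from $p$ as $c$ is, which its proof of Corollary~\ref{col:approx} then reuses (``a ball that encloses both $c'$ and $F(c,P)$ must have radius at least $F(c,P)/2$''), so if one adopted your proof, that corollary would need a slightly different one-line justification (e.g., $F(c,P) \le \diam(P) \le 2r(\text{MEB}(P))$ together with $r(\text{MEB}(P)) \le F(c,P)$ since $B(c,F(c,P))$ encloses $P$). The lower bound via the triangle inequality on a diametral pair is identical in both arguments, and your observation that it holds for arbitrary $c \in \R^d$ is accurate.
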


Let $r(\cdot)$ denote the radius of a set. We immediately obtain the following corollary.
\begin{corollary}\label{col:approx}
    Let $c$ be a point in the convex hull of $P$. Then, $F(c, P) \in [r(\text{MEB}(P)), 2r(\text{MEB}(P))]$.
\end{corollary}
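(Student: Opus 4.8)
The plan is to derive \cref{col:approx} directly from \cref{lem:approx} by relating the diameter and the radius of the minimum enclosing ball. Recall that for any point set $P$, if $R = r(\text{MEB}(P))$ and $D = \diam(P)$, then the standard inequalities $R \le D \le 2R$ hold: the upper bound $D \le 2R$ is immediate since any two points of $P$ lie inside a ball of radius $R$, so their distance is at most the diameter $2R$ of that ball; the lower bound $R \le D$ follows from Jung's theorem (or more simply, a ball of radius less than $D/2$ cannot contain two points at distance $D$, so $R \ge D/2$, i.e. $D \le 2R$ — wait, that only re-derives the other side). Let me be careful: the genuinely needed direction is $R \le D$, which holds because any pair of points realizing the diameter forces the enclosing ball to have radius at least $D/2$ is the wrong bound; instead, the enclosing ball of two points at distance $D$ has radius exactly $D/2$, and adding more points only increases the radius, so $R \ge D/2$. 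That gives $D/2 \le R \le D$, equivalently $R \le D \le 2R$.

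With these two inequalities in hand, the corollary is a short substitution. By \cref{lem:approx}, $F(c,P) \ge \diam(P)/2 = D/2 \ge R/2$? No — I want a lower bound of $R$ and an upper bound of $2R$ on $F(c,P)$. From $F(c,P) \ge D/2$ and $D \ge R$ I get $F(c,P) \ge R/2$, which is too weak. So the first-moment reasoning of \cref{lem:approx} alone is insufficient; I need a better lower bound on $F(c,P)$ in terms of $R$ specifically. The right observation is that $c \in \conv(P)$ cannot lie strictly inside a ball of radius smaller than... hmm. Actually the cleanest route: for any $c$, $F(c,P) = \max_{p\in P}\dist(c,p)$ is the radius of the smallest ball \emph{centered at $c$} enclosing $P$, which is at least $r(\text{MEB}(P)) = R$ since the MEB is the smallest enclosing ball over all centers. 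This gives $F(c,P) \ge R$ with no convexity needed at all. For the upper bound, $F(c,P) \le \diam(P) = D \le 2R$ by \cref{lem:approx} (which used $c \in \conv(P)$) together with $D \le 2R$. Combining, $F(c,P) \in [R, 2R] = [r(\text{MEB}(P)), 2r(\text{MEB}(P))]$.

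So the proof is essentially two one-line facts — $F(c,P) \ge r(\text{MEB}(P))$ because the MEB minimizes radius over all centers, and $\diam(P) \le 2\,r(\text{MEB}(P))$ because two points in a ball of radius $R$ are at distance at most $2R$ — chained with the already-proved upper bound $F(c,P) \le \diam(P)$ from \cref{lem:approx}. I would state it in two or three sentences. There is no real obstacle here; the only thing to be mildly careful about is not to accidentally invoke \cref{lem:approx} for the lower bound (where it is not strong enough), and instead use the minimality of the MEB radius. I would also make sure the corollary's phrasing ``$c$ in the convex hull'' is actually needed — it is needed only for the upper bound $F(c,P) \le \diam(P)$, while the lower bound $F(c,P) \ge r(\text{MEB}(P))$ holds for arbitrary $c$; this asymmetry is worth a brief remark but does not complicate anything.
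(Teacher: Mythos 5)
Your proof is correct and follows essentially the same route as the paper: the lower bound comes from the minimality of the MEB radius (the ball centered at $c$ of radius $F(c,P)$ already encloses $P$), and the upper bound from the fact that two points of $P$ at distance at least $F(c,P)$ must fit inside the MEB. The only cosmetic difference is that you chain $F(c,P)\le\diam(P)\le 2\,r(\text{MEB}(P))$ via the statement of \cref{lem:approx}, while the paper reuses the pair $c',p$ from that lemma's proof directly; your observation that convexity of the hull is needed only for the upper bound matches the paper's structure as well.
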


\begin{lemma}\label{lem:inconvexhull}
    Any point $c$ with Tukey depth at least $1$ with respect to a set $P$ lies in its convex hull.
\end{lemma}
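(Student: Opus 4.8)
The plan is to argue the contrapositive: if $c \notin \conv(P)$, then $c$ has Tukey depth $0$, i.e., there is a closed halfspace containing $c$ but no point of $P$. Since $P$ is finite, $\conv(P)$ is a closed (and bounded) convex set, so the separating hyperplane theorem applies in its strict form: a point outside a closed convex set can be strictly separated from it. Thus there exist a unit vector $v \in \R^d$ and a scalar $b$ with $\langle v, c\rangle > b$ and $\langle v, y\rangle < b$ for every $y \in \conv(P)$, and in particular for every $y \in P$.

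First I would record the geometric picture: the open halfspace $\{x : \langle v, x\rangle > b\}$ contains $c$ and is disjoint from $\conv(P)$. The only subtlety is that the definition of Tukey depth refers to \emph{closed} halfspaces, so I cannot directly use this open halfspace as a witness. To fix this, I would push the separating hyperplane slightly toward $c$. Concretely, since $\langle v, c \rangle > b$, pick $b'$ with $b < b' < \langle v, c\rangle$ (for instance $b' = (b + \langle v,c\rangle)/2$). Then the closed halfspace $H = \{x : \langle v, x\rangle \ge b'\}$ still contains $c$, and since $\langle v, y\rangle < b < b'$ for all $y \in P$, we have $H \cap P = \emptyset$.

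Therefore $H$ is a closed halfspace containing $c$ with zero points of $P$ inside it, so by \cref{def:tuk:depth} the Tukey depth of $c$ is $0$, which is less than $1$. This proves the contrapositive, and hence the lemma: any $c$ of Tukey depth at least $1$ must lie in $\conv(P)$.

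\textbf{Main obstacle.} There is essentially no hard step here; the only thing to be careful about is the open-versus-closed halfspace distinction, which is handled by the slight shift of the hyperplane described above. I should also make sure to invoke the strict separation theorem (valid because $\conv(P)$ is closed, as $P$ is finite) rather than the weak separation theorem, since the latter would only give $\langle v,y\rangle \le b$ and would not immediately produce a closed halfspace disjoint from $P$.
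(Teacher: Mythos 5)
Your proof is correct and follows essentially the same route as the paper: argue by contraposition/contradiction that a point outside $\conv(P)$ can be strictly separated from the (closed, compact) convex hull, yielding a closed halfspace containing $c$ but no point of $P$, hence Tukey depth $0$. Your extra care in nudging the hyperplane toward $c$ to get a closed halfspace is a fine (if not strictly necessary, since the closed halfspace bounded by the strictly separating hyperplane on $c$'s side already works) refinement of the same argument.
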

Combining \cref{lem:approx} and \cref{lem:inconvexhull} proves that any point with Tukey depth $1$ is representative.
Next, we show that the Tukey depth of a centerpoint for $P$ can be lower bounded, even if $P$ goes through $\eps |P|$ adversarial updates.
\begin{lemma}\label{lem:deletions}
    Let $P$ be a set of $n$ points and $c$ be an $\alpha$-centerpoint of $P$.
    For any set $P'$, the point $c$ has Tukey depth at least $\ceil{\alpha n - t}$ with respect to $P'$ where $t = |P \setminus P'|$.
    
    In the special case where $|P' \cap P| > (1-\eps) |P|$ is satisfied, for $\eps < \alpha$, the point $c$ has Tukey depth at least $1$ with respect to $P'$. 
\end{lemma}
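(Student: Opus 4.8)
\textbf{Proof proposal for \cref{lem:deletions}.}

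The plan is to directly unpack the definition of Tukey depth and track how many points a halfspace through $c$ can lose when we pass from $P$ to $P'$. Fix an arbitrary closed halfspace $H$ with $c \in H$. Since $c$ is an $\alpha$-centerpoint of $P$, we know $|H \cap P| \ge \ceil{\alpha n}$. Now $H \cap P'$ contains all of $H \cap P$ except possibly those points of $P$ that were deleted, i.e., those in $P \setminus P'$; formally $H \cap P \subseteq (H \cap P') \cup (P \setminus P')$, so $|H \cap P'| \ge |H \cap P| - |P \setminus P'| \ge \ceil{\alpha n} - t$. Since this holds for every closed halfspace containing $c$, the Tukey depth of $c$ with respect to $P'$ is at least $\ceil{\alpha n} - t$. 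To match the stated bound $\ceil{\alpha n - t}$ I would note $\ceil{\alpha n} - t \ge \ceil{\alpha n - t}$ since $t$ is an integer (subtracting an integer commutes with the ceiling), which gives the first claim. (Depending on how carefully we want to state it, $\ceil{\alpha n} - t$ is actually the cleaner bound, but $\ceil{\alpha n - t}$ follows and is what's used downstream.)

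For the special case, suppose $|P' \cap P| > (1-\eps)|P|$ with $\eps < \alpha$. Then $t = |P \setminus P'| = |P| - |P' \cap P| < \eps |P| = \eps n$. Plugging into the bound from the first part, the Tukey depth of $c$ with respect to $P'$ is at least $\ceil{\alpha n - t} > \ceil{\alpha n - \eps n} = \ceil{(\alpha - \eps) n}$. Since $\alpha - \eps > 0$ and $n \ge 1$, we have $(\alpha-\eps)n > 0$, hence $\ceil{(\alpha-\eps)n} \ge 1$; more carefully, since $t$ is a nonnegative integer strictly less than $\eps n \le \alpha n$, we get $t \le \ceil{\alpha n} - 1$, so $\ceil{\alpha n} - t \ge 1$. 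Either way the Tukey depth is at least $1$ with respect to $P'$.

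The argument is essentially a one-line containment inequality followed by arithmetic, so I do not anticipate a genuine obstacle; the only subtlety is being careful with ceilings versus the integrality of $t$, and making sure the strict inequality $\eps < \alpha$ (rather than $\le$) is what rules out the boundary case where the depth could drop to $0$. It is worth double-checking that the lemma does not need $P' \subseteq P$ or any relationship between the sizes of $P$ and $P'$ — indeed it does not, because $P'$ may contain arbitrarily many new points, and adding points to a set can only increase (never decrease) the count inside any fixed halfspace, so the bound derived from deletions alone is safe. Combined with \cref{lem:inconvexhull} and \cref{lem:approx}, this immediately yields that an $\alpha$-centerpoint of $P$ is an $\eps$-robust representative of $P$ for any $\eps < \alpha$, which is the payoff we want.
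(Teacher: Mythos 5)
Your proof is correct and follows essentially the same route as the paper: bound the loss in any closed halfspace through $c$ by $t = |P \setminus P'|$ to get depth at least $\lceil \alpha n \rceil - t$ (which in fact equals $\lceil \alpha n - t\rceil$ since $t$ is an integer), then use $t < \eps n < \alpha n$ to conclude the ceiling of a positive quantity is at least $1$. The only cosmetic quibble is the claimed strict inequality $\lceil \alpha n - t\rceil > \lceil \alpha n - \eps n\rceil$ (monotonicity of the ceiling only gives $\ge$), but your fallback integrality argument already closes that and the conclusion stands.
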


Combining \cref{lem:approx}, \cref{lem:inconvexhull}, and \cref{lem:deletions},  we obtain the following.
\begin{corollary}\label{cor:rob}
    Any $\alpha$-centerpoint of $P$ is $\epsilon$-robustly representative for any $\epsilon < \alpha$.
\end{corollary}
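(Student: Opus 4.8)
The plan is to prove \cref{cor:rob} by chaining together the three lemmas just established, so the argument is essentially bookkeeping about depths and convex hulls. Fix an $\alpha$-centerpoint $c$ of $P$ and let $\epsilon < \alpha$. By \cref{def:rob}, I must show that $c$ is representative of every set $P'$ with $|P' \cap P| > (1-\epsilon)|P|$, i.e., that $\max_{y \in P'} \dist(c,y) \le \diam(P')$. So the whole proof reduces to verifying, for each such $P'$, that $c$ lies in $\conv(P')$ and then quoting \cref{lem:approx}.

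First I would invoke \cref{lem:deletions} in its ``special case'' form: since $|P' \cap P| > (1-\epsilon)|P|$ with $\epsilon < \alpha$, the point $c$ has Tukey depth at least $1$ with respect to $P'$. Next I would apply \cref{lem:inconvexhull} to conclude that $c \in \conv(P')$. Then \cref{lem:approx} (with the point set $P'$ in place of $P$) gives $F(c,P') \in [\diam(P')/2,\ \diam(P')]$; in particular $\max_{y \in P'}\dist(c,y) = F(c,P') \le \diam(P')$, which is exactly the defining inequality for $c$ being representative of $P'$ in \cref{def:rep:point}. Since $P'$ was an arbitrary set satisfying the $(1-\epsilon)$-overlap condition, this establishes that $c$ is $\epsilon$-robustly representative of $P$ per \cref{def:rob}, completing the proof.

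There is essentially no genuine obstacle here — \cref{cor:rob} is a corollary precisely because all the content lives in \cref{lem:deletions}. The only points that deserve a sentence of care are (i) making sure the hypothesis of the special case of \cref{lem:deletions} matches verbatim the hypothesis in \cref{def:rob}, namely $|P' \cap P| > (1-\epsilon)|P|$ and $\epsilon < \alpha$ (both are stated identically, so this is immediate), and (ii) noting that \cref{lem:approx} is applied to the set $P'$, not to $P$, which is fine since its statement holds for any point in the convex hull of any point set. One might also remark for completeness that the degenerate case $P' \cap P = P$ (no deletions) is subsumed, and that $P'$ may contain arbitrarily many extra points outside $P$ without affecting the argument, since Tukey depth and the convex hull only grow when points are added. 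Hence the corollary follows, and as a further consequence (via \cref{lem:rob-apx}) maintaining $\max_{y \in P'}\dist(c,y)$ yields a $2$-approximation of $\diam(P')$ under any insertions and up to $\floor{\epsilon|P|}$ deletions.
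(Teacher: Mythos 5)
Your proof is correct and matches the paper's approach exactly: the paper derives \cref{cor:rob} precisely by chaining \cref{lem:deletions} (Tukey depth at least $1$ with respect to any such $P'$), \cref{lem:inconvexhull} ($c \in \conv(P')$), and \cref{lem:approx} applied to $P'$ (so $F(c,P') \le \diam(P')$). Your side remarks about extra points in $P'$ and the matching hypotheses are accurate but not needed beyond what the cited lemmas already cover.
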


\section{Dynamic algorithm for the diameter problem}

We are now ready to explain how to compute and maintain a $2$-approximation to the diameter of a dynamic point set $P$, against an adaptive adversary. Let $\eps = 1/4d^2$ be a parameter, $\delta$ be an upper bound on the desired failure probability, and $t$ denote the current timestep, i.e., the number of updates occurred so far. 

If the size of the current point set is small, say at most $d^4$, we can obtain a $2$-approximation without a centerpoint. The algorithm picks an arbitrary point in $P$ and computes its furthest neighbor in $O(d^5)$ 

When the size of $P$ is larger than $d^4$, we compute an $\alpha$-centerpoint $c$ that will be kept as a representative point until roughly $\alpha |P|$ deletions occur. Once we have a centerpoint $c$, we maintain $F(c,P)$ using a balanced BST (e.g., implemented as an AVL tree) that contains the distances of all points from the centerpoint. We then use a counter to keep track of the number of deletions that $P$ has undergone since the computation of $c$, so that another centerpoint will be computed as soon as $c$ stops being representative. We give a more detailed pseudocode below: queries are handled by \cref{1}, insertions by \cref{2}, and deletions by \cref{3}.

\begin{algorithm}[H]
\caption{\textsc{ApproximateDiameterQuery}($P, d, \eps, t, \delta$)}\label{1}
\centering
\begin{algorithmic}[1]
\If{$P.$counter $ > 0$} \Comment{The previously computed centerpoint is still representative}
\State \Return $F(P.\text{centerpoint}, P)$
\EndIf
\smallskip

\If{$|P| \le d^4$} \Comment{Use an arbitrary point}
\State Pick an arbitrary point $p \in P$
\State \Return $F(p, P)$
\EndIf

\smallskip
\State $P.$counter $\leftarrow \eps |P|$ \Comment{Compute a new centerpoint}
\State $P.\text{centerpoint} \leftarrow $ Find $\nicefrac{1}{3d^2}$-centerpoint of $P$ via \cref{thm:centerpoint} with $\varphi = \nicefrac{\delta}{2 t^2}$
\State $P.\text{distances} \leftarrow $ build a balanced BST on $\bigcup_{p \in P} \dist(c,p)$
\State \Return $F(P.\text{centerpoint}, P)$
\end{algorithmic}
\end{algorithm}

\begin{algorithm}[H]
\caption{\textsc{ApproximateDiameterInsertion}($P, p$)}\label{2}
\centering
\begin{algorithmic}[1]
\State insert $p$ into $P$
\If{$P.$counter $ > 0$} 
\State insert $\dist(P.\text{centerpoint}, p)$ into $P.\text{distances}$
\EndIf
\end{algorithmic}
\end{algorithm}

\begin{algorithm}[H]
\caption{\textsc{ApproximateDiameterDeletion}($P, p$)}\label{3}
\centering
\begin{algorithmic}[1]
\State delete $p$ from $P$
\If{$P.$counter $ > 0$} 
\State delete $\dist(P.\text{centerpoint}, p)$ from $P.\text{distances}$
\State $P.$counter $\leftarrow$ $P.$counter - 1
\EndIf
\end{algorithmic}
\end{algorithm}

We establish the approximation guarantee achieved by our algorithm in the next lemma.
\begin{lemma}\label{thm:correct}
 Given a sequence $S$ of $n$ updates by an adaptive adversary, with each update being either an insertion or a deletion of a point, let $P_t$ be the set obtained after the first $t$ updates. 
 The algorithm \textsc{ApproximateDiameterQuery} (\cref{1}) returns a $2$-approximate diameter of $P_t$, at any point in time $t \in [n]$, with probability at least $1 - \delta$. 
\end{lemma}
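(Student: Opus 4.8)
The plan is to condition on a global good event $\mathcal{E}$ that all centerpoint computations invoked throughout the stream succeed, and then argue deterministically that, conditioned on $\mathcal{E}$, the query routine always returns a value in $[\diam(P_t)/2, \diam(P_t)]$. First I would bound $\Pr[\neg\mathcal{E}]$. Each call to \textsc{ApproximateDiameterQuery} that actually recomputes a centerpoint at timestep $t$ invokes \cref{thm:centerpoint} with failure parameter $\varphi = \delta/(2t^2)$. Since there are at most $n$ timesteps and at most one recomputation per timestep, a union bound gives $\Pr[\neg\mathcal{E}] \le \sum_{t=1}^{n} \delta/(2t^2) \le (\delta/2)\sum_{t=1}^{\infty} 1/t^2 = \delta \pi^2/12 < \delta$. (Even the crude bound $\sum_t \delta/(2t^2) < \delta$ suffices.) Note this union bound is valid against a randomness-adaptive adversary: the adversary's updates depend on past random bits, but the failure event of each centerpoint call is a property of that call's own fresh randomness, and we are simply bounding the probability that \emph{any} of the (at most $n$) calls fails, regardless of which timesteps the adversary forces recomputations at.

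Next, conditioned on $\mathcal{E}$, I would show the returned value is correct at every timestep $t$ by case analysis on which branch of \cref{1} fires. In the branch $|P_t| \le d^4$ with \texttt{counter}$=0$, the algorithm returns $F(p, P_t)$ for an arbitrary $p \in P_t \subseteq \conv(P_t)$; by \cref{lem:approx} this lies in $[\diam(P_t)/2, \diam(P_t)]$. In the branch where a fresh centerpoint is computed (\texttt{counter}$=0$ and $|P_t| > d^4$), conditioned on $\mathcal{E}$ the point $c$ is a $\tfrac{1}{3d^2}$-centerpoint of the current set $P_t$, hence has Tukey depth $\ge 1$, hence lies in $\conv(P_t)$ by \cref{lem:inconvexhull}, and again \cref{lem:approx} gives the bound for $F(c,P_t)$. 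The remaining and main case is \texttt{counter}$>0$: the current centerpoint $c$ was computed at some earlier timestep $t_0 < t$ as a $\tfrac{1}{3d^2}$-centerpoint of $P_{t_0}$, the counter was initialized to $\eps|P_{t_0}|$ with $\eps = 1/(4d^2)$, and it has been decremented once per deletion since. I would argue that the number of deletions from $P_{t_0}$ that have occurred in $(t_0, t]$ is exactly $\eps|P_{t_0}| - P.\texttt{counter} < \eps|P_{t_0}|$, so with $t' := |P_{t_0}\setminus P_t|$ we have $t' < \eps|P_{t_0}|$, i.e. $|P_t \cap P_{t_0}| \ge |P_{t_0}| - t' > (1-\eps)|P_{t_0}|$. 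Since $\eps = 1/(4d^2) < 1/(3d^2) = \alpha$, \cref{cor:rob} says $c$ is $\eps$-robustly representative of $P_{t_0}$, so by \cref{lem:rob-apx} (applied with $P \leftarrow P_{t_0}$, $P' \leftarrow P_t$) we get $\diam(P_t)/2 \le \max_{y\in P_t}\dist(c,y) \le \diam(P_t)$, and $\max_{y \in P_t}\dist(c,y) = F(c,P_t)$ is exactly what the BST returns.

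The main obstacle I anticipate is the bookkeeping around the counter and the deletion count: one must check that insertions do not touch the counter (so a burst of insertions after $t_0$ does not make the ``$(1-\eps)$-fraction'' condition vacuous — indeed adding points only helps, since $|P_t \cap P_{t_0}|$ is unchanged by insertions while the relevant threshold $(1-\eps)|P_{t_0}|$ is fixed at recomputation time), and that when the counter hits $0$ the next query recomputes before returning, so there is never a ``stale'' centerpoint in use. A minor subtlety: in the recompute branch the BST is rebuilt on $\{\dist(c,p) : p \in P_t\}$ using the new $c$ (the pseudocode writes $c$ but means \texttt{centerpoint}), so $F(P.\text{centerpoint}, P)$ is evaluated correctly; and in \cref{2},\cref{3} the BST is maintained only while \texttt{counter}$>0$, which is consistent with the invariant that the BST reflects distances to the current centerpoint exactly when one is in force. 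I would fold all of this into a short inductive invariant: ``at the start of each timestep, either \texttt{counter}$=0$, or \texttt{centerpoint} is a centerpoint of $P_{t_0}$ for the $t_0$ at which it was last set and $P.\texttt{counter} = \eps|P_{t_0}| - |P_{t_0}\setminus P_t| \ge 0$, and in the latter case \texttt{distances} is a BST on $\{\dist(\texttt{centerpoint}, p): p \in P_t\}$.'' Everything else is a direct appeal to the lemmas already proved.
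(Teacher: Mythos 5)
Your proposal is correct and follows essentially the same route as the paper's proof: the union bound $\sum_t \delta/(2t^2)\le\delta$ over centerpoint computations with fresh randomness, the counter invariant giving $|P_t\cap P_{t_0}|>(1-\eps)|P_{t_0}|$, and then \cref{cor:rob} plus \cref{lem:rob-apx} (with the small-set branch handled by \cref{lem:approx}). The only nitpick is that $\eps|P_{t_0}|-P.\texttt{counter}$ counts \emph{all} deletions since $t_0$ (including of points inserted after $t_0$), so it is an upper bound on $|P_{t_0}\setminus P_t|$ rather than ``exactly'' that quantity, which only strengthens the inequality you need.
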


Next, we prove an amortized bound on the cost of each operation performed by our algorithm.
\begin{lemma}\label{thm:runtime}
  Given a sequence $S$ of $n$ possibly adversarial updates, let $P_t$ be the set obtained after applying the first $t$ updates. The amortized update time per operation is $O(d^5 \log^3 d \log^3 (n/\delta))$. 
\end{lemma}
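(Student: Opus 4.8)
The plan is to charge the cost of the expensive operation—computing a new centerpoint and rebuilding the distance BST—to the sequence of cheap operations (insertions and deletions) that must occur between consecutive recomputations. First I would observe that, outside of the centerpoint recomputation, every operation is cheap: an insertion (\cref{2}) performs one distance computation in $O(d)$ time plus one BST insertion in $O(\log n)$ time; a deletion (\cref{3}) is analogous; and a query (\cref{1}) with \texttt{counter} $> 0$ just returns $F(P.\text{centerpoint}, P)$, which the BST supplies in $O(\log n)$ time. Similarly, a query on a small set ($|P| \le d^4$) costs $O(d \cdot |P|) = O(d^5)$ time to scan all points, which fits within the claimed bound. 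So the only term that needs amortization is the cost incurred when the counter hits $0$ and a query triggers lines 7–10 of \cref{1}.

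Next I would bound that triggered cost. By \cref{thm:centerpoint} with $\varphi = \delta/(2t^2)$, finding a $\frac{1}{3d^2}$-centerpoint takes $O\bigl(d^7 \log^3 d \log^3(2t^2/\delta)\bigr) = O\bigl(d^7 \log^3 d \log^3(n/\delta)\bigr)$ time, and building a balanced BST on the $|P|$ distances costs $O(|P| \cdot (d + \log |P|)) = O(|P| \cdot d \log n)$ (each distance is an $O(d)$-time computation, each insertion $O(\log n)$). The key point is that after this recomputation we set $P.\text{counter} \leftarrow \eps |P|$ with $\eps = 1/(4d^2)$, and the counter is decremented only on deletions (line 4 of \cref{3}); moreover the recomputation branch is reached only when $|P| > d^4$. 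Hence at least $\eps |P| = |P|/(4d^2) \ge d^4/(4d^2) = d^2/4$ deletions must occur before the next recomputation is triggered. (One should also check that during this window $|P|$ stays within a constant factor of its value at recomputation time, or simply note that each deletion in the window is charged exactly once against the most recent recomputation, which suffices.)

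Now I would do the charging. Let the recomputations occur at times $t_1 < t_2 < \dots$, and let $n_i = |P_{t_i}|$ be the set size when the $i$-th recomputation fires. The cost of the $i$-th recomputation is $O\bigl(d^7 \log^3 d \log^3(n/\delta) + n_i d \log n\bigr)$, and between recomputation $i$ and $i+1$ there are at least $\eps n_i = n_i/(4d^2)$ deletions. Charging the $i$-th recomputation cost to those $\ge n_i/(4d^2)$ deletions, each such deletion absorbs
\[
O\!\left(\frac{d^7 \log^3 d \log^3(n/\delta) + n_i d \log n}{n_i/(4d^2)}\right)
= O\!\left(\frac{d^9 \log^3 d \log^3(n/\delta)}{n_i} + d^3 \log n\right).
\]
Since the recomputation branch requires $n_i > d^4$, the first term is $O\bigl(d^5 \log^3 d \log^3(n/\delta)\bigr)$, and the second is dominated by it. Adding the $O(d + \log n)$ intrinsic cost of each operation, the total amortized cost per update is $O\bigl(d^5 \log^3 d \log^3(n/\delta)\bigr)$, as claimed. (A clean way to package this is a potential-function argument: let $\Phi = c \cdot d^5 \log^3 d \log^3(n/\delta) \cdot (\text{counter})$ for a suitable constant $c$; each deletion decreases $\Phi$ by enough to pay for its share, each insertion leaves $\Phi$ unchanged, and a recomputation raises $\Phi$ from $0$ to $c \cdot d^5 \log^3 d \log^3(n/\delta) \cdot \eps n_i$, which covers the recomputation cost because $\eps n_i \cdot d^5 \log^3 d \log^3(n/\delta) \gtrsim n_i d \log n + d^7 \log^3 d \log^3(n/\delta)$ whenever $n_i > d^4$.)

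The main obstacle I anticipate is bookkeeping the interaction between the two "triggers" for a rebuild—the counter reaching $0$ versus the set being small—and making sure insertions cannot repeatedly force small-set queries at cost $O(d^5)$ more often than the amortization allows; this is handled by noting the small-set query cost $O(d^5)$ is itself within the target per-operation bound, so it need not be amortized at all. A secondary subtlety is that the parameter $\varphi$ in \cref{thm:centerpoint} depends on $t$, so $\log^3(1/\varphi) = O(\log^3(t/\delta)) = O(\log^3(n/\delta))$; this only affects the analysis (the algorithm need not know $n$), consistent with the remark in the preliminaries. Everything else is routine.
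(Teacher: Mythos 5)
Your core charging argument is essentially the paper's, with the direction of the charging reversed: the paper amortizes the $(i+1)$-th recomputation, of cost $O(d^7\log^3 d\log^3(n/\delta)+d|P_{t_{i+1}}|\log|P_{t_{i+1}}|)$, \emph{backward} over all updates in $(t_i,t_{i+1}]$, using $t_{i+1}-t_i\ge \eps|P_{t_i}|$ and $|P_{t_{i+1}}|\le |P_{t_i}|+(t_{i+1}-t_i)$, whereas you charge the $i$-th recomputation \emph{forward} onto the at least $\eps n_i$ deletions that must drain the counter before the next rebuild. Your version is arithmetically a bit cleaner (the cost involves $n_i$ and the number of charged deletions is $\eps n_i$, so no size-drift inequality is needed), but forward charging leaves one rebuild unpaid: if the stream ends, or only insertions and queries follow, before the counter reaches zero, the most recent rebuild's cost, up to $O(d^7\log^3 d\log^3(n/\delta)+n_m d\log n)$, has no deletions to absorb it. This is easy to patch---charge that single rebuild to the more than $d^4$ insertions that necessarily preceded it (this is in effect the paper's base case), or charge backward throughout as the paper does---but as written your accounting does not cover it, and the parenthetical ``each deletion is charged exactly once against the most recent recomputation'' does not address this case.

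The potential function you offer as the ``clean packaging'' is inverted. With $\Phi = c\, d^5\log^3 d\log^3(n/\delta)\cdot(\text{counter})$, the potential is $0$ just before a rebuild and \emph{jumps up} to $c\, d^5\log^3 d\log^3(n/\delta)\cdot\eps n_i$ at the rebuild, so the rebuild's amortized cost (actual cost plus $\Delta\Phi$) exceeds its actual cost rather than being $O(d^5\log^3 d\log^3(n/\delta))$. For the potential method to work, $\Phi$ must be accumulated by the cheap operations preceding the rebuild and drop when the rebuild fires, e.g.\ taking $\Phi$ proportional to the number of updates since the last rebuild (equivalently, to $\eps n_{\mathrm{last}}-\text{counter}$ plus an insertion count to cover growth of the BST-rebuild term). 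The remaining ingredients---$O(d\log n)$ per insertion/deletion via \cref{2} and \cref{3}, $O(1)$ answers while the counter is positive, the per-$t$ choice $\varphi=\delta/(2t^2)$ giving $\log^3\varphi^{-1}=O(\log^3(n/\delta))$, and the observation that the $O(d^5)$ brute-force query for $|P|\le d^4$ needs no amortization---match the paper.
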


We are now ready to prove a worst-case bound on the update time. By the discussion above, it is enough to guarantee that at all times a centerpoint is available, so that we do not need to compute one from scratch before answering a query. We explain how to maintain a valid centerpoint at all times when $|P_t| = \Omega(d^4 \log^{1.5} d \log^{1.5} \varphi^{-1})$. If the latter assumption is not respected, we can answer each query on the fly in $O(d^5 \log^{1.5} d \log^{1.5} \varphi^{-1})$. 
To build some intuition, consider a centerpoint $c$ computed at time $t$. Point $c$ will be valid for the next $\eps |P_{t}|$ updates. Rather than waiting $\eps |P_{t}|$ timesteps, we initiate the computation of the next centerpoint $c'$ at time $t + \nicefrac{3}{4} \cdot \eps  |P_t|$. The computation of $c'$ is being spread over the next $\nicefrac{1}{4} \cdot \eps  |P_t|$ iterations in which $c$ is still valid. This process can be implemented as shown in \cref{4}, where each iteration of the for loop corresponds to an update.

\begin{algorithm}[H]
\caption{\textsc{De-amortizedCenterpointComputation}($d, \eps$)}\label{4}
\centering
\begin{algorithmic}[1]
\State $\text{centerpoint} \leftarrow \emptyset$
\State $\text{expiration\_time} \leftarrow 0$ \Comment{When the current centerpoint stops being representative}
\State $\text{update\_time} \leftarrow 0$ \Comment{When to start computing the next centerpoint}
\smallskip
\State $\text{next\_centerpoint} \leftarrow \emptyset$
\State $\text{next\_expiration\_time} \leftarrow 0$ 
\smallskip
\For{$t = 1, \dotsc n$}
    \If{$t$ equals expiration\_time} \Comment{The next centerpoint takes over}
    \State $\text{centerpoint} \gets \text{next\_centerpoint}$
    \State $\text{expiration\_time} \gets \text{next\_expiration\_time}$
    \EndIf

    \If{$t$ equals update\_time} \Comment{The computation of the next centerpoint starts}
    \State \textbf{do} in the background within $T$ timesteps (cf. \cref{thm:runtime-worst-case})
    \State \hspace{1em} Create a copy of $P_t$ (filter newer points)
    \State \hspace{1em} $\text{next\_centerpoint} \gets (1/3d^2)$-centerpoint of $P_t$ with $\varphi = \nicefrac{\delta}{2 t^2}$
    \State \hspace{1em} Compute all distances from \text{next\_centerpoint} (include newer  points)
    \smallskip
    \State next\_expiration\_time $\gets t + \eps |P_t|$
    \State update\_time $\gets t + \nicefrac{3}{4} \cdot \eps  |P_t|$
    \EndIf
\EndFor
\end{algorithmic}
\end{algorithm}

\begin{lemma}\label{thm:runtime-worst-case}
  The worst-case update time of \cref{1} combined with the centerpoint computation of \cref{4} is $O(d^5 \log^{1.5} d \log^{1.5} (n/\delta))$. This update time holds deterministically, regardless of the adversarial nature of the updates.
\end{lemma}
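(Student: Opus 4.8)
The plan is to fix a size threshold $\tau := c\, d^4 \log^{1.5} d \, \log^{1.5}(n/\delta)$ for a sufficiently large constant $c$ (using that $\varphi = \delta/(2t^2)$ gives $\log \varphi^{-1} = O(\log(n/\delta))$), and to run the brute-force branch of \cref{1} whenever $|P_t| < \tau$ and the centerpoint maintained by \cref{4} whenever $|P_t| \ge \tau$; I then bound the update time separately in the two regimes. For $|P_t| < \tau$: answer each query by picking an arbitrary $p \in P_t$ and returning $F(p, P_t)$ (representative since $p \in \conv(P_t)$ by \cref{lem:approx} and \cref{lem:inconvexhull}), while insertions and deletions only touch the stored set; this costs $O(|P_t|\cdot d) = O(\tau d) = O(d^5 \log^{1.5} d \log^{1.5}(n/\delta))$ per operation. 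For $|P_t| \ge \tau$: the goal is to show \cref{4} always keeps a \emph{valid} (representative) centerpoint available, so \cref{1} runs only its first branch (returning $F(P.\text{centerpoint}, P)$ via the BST, $O(\log n)$), on top of the $O(\log n)$ per-update BST maintenance of \cref{2,3}; the dominant cost is then the background work of \cref{4}, analyzed next. The constant $c$ is chosen so the two regimes balance.

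For the scheduling analysis, recall that when a new centerpoint is launched at some time $s$ (on a snapshot of $P_s$, with parameter $\varphi = \delta/(2s^2)$), \cref{4} sets its expiration to $s + \eps|P_s|$ and launches the following centerpoint at $s + \tfrac34\eps|P_s|$, so the new centerpoint has a window of $\tfrac14\eps|P_s| = \Theta(|P_s|/d^2)$ timesteps in which to be computed before the previous one expires. The total work to spread over this window is $W = O(d^7\log^3 d\log^3(n/\delta))$ to invoke \cref{thm:centerpoint}, plus $O(|P_s|(d+\log n))$ to copy the snapshot, to compute all distances from the new centerpoint (at most $O(|P_s|)$ of them, since at most $\tfrac14\eps|P_s|$ points arrive during the window), and to build the BST. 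Dividing $W$ by $\Theta(|P_s|/d^2)$ gives a per-timestep cost of $O(d^9\log^3 d\log^3(n/\delta)/|P_s|) + O(d^3 + d^2\log n)$. Since any query-time size $\ge\tau$ forces the currently active centerpoint's launch size to be $\Omega(\tau)$ --- the set size changes by at most one per step, and the active centerpoint is at most $\eps|P_s|$ steps old when queried --- we get $|P_s| = \Omega(\tau)$, so the per-timestep cost is $O(d^5\log^{1.5} d\log^{1.5}(n/\delta))$ for large enough $c$. The same inequality shows $W$ fits inside the window (it is at most the window length times $O(d^5\log^{1.5} d\log^{1.5}(n/\delta))$, which is at most $\tfrac14\eps|P_s|$), so the next centerpoint is ready before the handover.

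It then remains to confirm that each handed-over centerpoint is representative throughout the interval during which \cref{4} keeps it active. A $\frac{1}{3d^2}$-centerpoint of $P_s$ is $\eps$-robustly representative of $P_s$ since $\eps = 1/4d^2 < 1/3d^2$ (\cref{cor:rob}), and by \cref{lem:rob-apx} it stays representative of $P_t$ as long as fewer than $\eps|P_s|$ points of $P_s$ have been deleted, i.e.\ for all $t < s + \eps|P_s|$, which is exactly the expiration rule of \cref{4}. At the step when $|P_t|$ first reaches $\tau$ (or reaches it again after dipping below), I would set the launch time to the current step to trigger the first centerpoint computation, and answer queries by brute force during the $O(\tau/d^2)$ steps until it finishes --- throughout which $|P_t| = O(\tau)$, so brute force stays within budget. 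Combining the regimes, every operation takes $O(d^5\log^{1.5} d\log^{1.5}(n/\delta))$ worst-case time, and this is deterministic: the choice of branch in \cref{1}, the launch and completion times of centerpoint computations, and the window lengths are all functions of the sequence of set sizes only, not of the algorithm's random bits or its (random) centerpoints, and \cref{thm:centerpoint} has a fixed running-time bound (one may even impose a hard time cap, charging any overrun to the failure probability, which affects \cref{thm:correct} but not the running time). Hence the bound holds even against a randomness-adaptive adversary.

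I expect the main obstacle to be the bookkeeping inside \cref{4}: reconciling its two nested windows (when the next centerpoint is launched versus when it takes over) with the two requirements on a handed-over centerpoint --- computed in time \emph{and} still representative when it becomes active --- under an $|P_t|$ that the adversary may fluctuate arbitrarily, including the off-by-one at the expiration step and the (re-)initialization when the size crosses $\tau$.
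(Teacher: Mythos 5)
Your proposal is correct and follows essentially the same route as the paper: de-amortize the snapshot-copy, centerpoint, and BST-rebuild work over the $\tfrac14\eps|P|$-step window between the scheduled launch and the previous centerpoint's expiration, use the size threshold $\tau=\Theta(d^4\log^{1.5}d\,\log^{1.5}(n/\delta))$ with brute-force queries below it, and rely on the deterministic running-time bound of the Monte Carlo centerpoint algorithm. The only item the paper checks explicitly that you leave to ``bookkeeping'' is that consecutive background computations never overlap, which follows from $|P_{t}|\ge(1-\eps)|P_{t_0}|$ between launches and is a minor verification.
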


Finally, \cref{thm:main-diam} follows from combining \cref{thm:correct} with \cref{thm:runtime-worst-case}.

\section{Dynamic algorithm for $k$-center clustering}

In this section, we describe our dynamic algorithm for the $k$-center problem. Our approach is to identify as a measure of robustness the number of points within a certain radius from each candidate center. Since this would be computationally expensive to calculate exactly, we use random sampling to either find candidate centers or estimate the density of each candidate. We will prove that the point with the highest density provides a robust center. We then cluster all points close to such a center (breaking ties assigning always to the lexicographically smaller). The dynamic maintenance of each cluster will follow from the robustness of the chosen center, defined as follows.

\begin{definition}[$(\ell, \beta)$-robust center]\label{def:robustcenter}
    Given a point set $P \subset \R^{n \times d}$, a point $x \in \R^d$ is an $(\ell, \beta)$-robust center of $P$ if $|P \cap B(x, 2 (1+\eps)^\ell| \ge \beta$, for $\ell \in [L]$, $\beta \in [n]$, and some error parameter $\eps > 0$.
\end{definition}

Our approach will reduce the dynamic $k$-center problem to maintaining an ordered sequence of at most $k$ instances $\mathcal{I}_1, \ldots, \mathcal{I}_k$, where: (1) each instance is associated with a robust-center, (2) instance $\mathcal{I}_i$ is constructed recursively after completing instances $\mathcal{I}_1, \ldots, \mathcal{I}_{i-1}$, and (3) if, at any point during the dynamic process, an instance $\mathcal{I}_i$ must be reconstructed, then all subsequent instances $\mathcal{I}_i, \ldots, \mathcal{I}_k$ are rebuilt as well. We first introduce a static algorithm and then discuss its robust dynamization. The pseudocode of all our algorithms is included in the appendix.

\paragraph{Static $k$-center algorithm} 
Let $\rho = \frac{\max_{p,q \in P} \dist(p,q)}{\min_{p,q \in P} \dist(p,q)}$ denote the \emph{spread ratio} of the point set $P \subset \mathbb{R}^d$. We consider guesses $\ell \in [\guesses]$, where $\guesses = \lceil \log_{(1+\eps)} \rho \rceil$, for the optimal radius $\opt(P)$, and run the following procedure in parallel for each guess.

For each guess $(1+\eps)^{\ell}$, we uniformly sample a set $S^{\ell}$ of $O(\sampsize)$ points from $P$; the actual size of the $S^{\ell}$ may be optimized based on the number of unclustered points. For each sample point $s \in S^{\ell}$, we compute how many other sample points lie within distance $2(1+\eps)^{\ell+1}$, i.e., $\left| B(s, 2(1+\eps)^{\ell+1}) \cap S^{\ell} \right|$. The sample point $s^{\ell}$ with the highest count is chosen as a center. If $\ell$ is a good guess (i.e., close to $\opt(P)$), then there is at least one optimal large cluster that contains at least $n/k$ points, which will provide a $(\ell, \Omega(\nicefrac{n}{k}))$-robust center. 

Once the sample point $s^{\ell}$ with the largest neighborhood is selected, we gather all points in $P$ within $B(s^{\ell}_i, 4(1+\eps)^{\ell+1})$, where the radius is doubled to ensure complete coverage. We then define the corresponding cluster $\mathfrak{C}^{\ell}_i$ with $c^{\ell}_i := s^{\ell}$ as $P \cap B(c^{\ell}, 4(1+\eps)^{\ell+1})$, and remove its points from $P$ before recursing. If $P$ becomes empty within $k$ iterations, we have a \emph{good} guess; otherwise, the guess is marked as \emph{low-value}. We always return the smallest good guess. See the appendix for the pseudocode. 

\paragraph{Robust dynamization} 
We now adapt the static algorithm to the dynamic setting. We maintain $\guesses$ parallel copies of the dataset—one per guess—denoted $P^{\ell}$. For each cluster $\mathfrak{C}^{\ell}_i$, we maintain two counters: $n^{\ell}_i$ tracks the cluster size, and $m^{\ell}_i$ counts how many updates can be performed before rebuilding the cluster. When $m^{\ell}_i$ hits zero, we rebuild $\mathfrak{C}^{\ell}_i$ and all subsequent clusters for guess $\ell$.

On insertion of a point $p$, we add it to each $P^{\ell}$. If $p$ falls within a current cluster $\mathfrak{C}^{\ell}_i$, we add it to such a cluster and increment $m^{\ell}_i$ if $p$ is close to its center. If $p$ is not covered by any existing cluster, we add it to the remainder set $Q^{\ell}$. These remainder sets allow us to detect increases in $\opt(P)$ over time. In such cases, we extract the solution from a clustering associated with a higher guess. 

Deletions are handled similarly. When a point $p$ is deleted, we remove it from each $P^{\ell}$ and determine whether it belonged to a cluster $\mathfrak{C}^{\ell}_i$. If it has distance at most $2(1+\eps)^{\ell+1}$ from $c^{\ell}_i$, we decrement $m^{\ell}_i$ and rebuild if needed. If $p$ was part of the remainder set $Q^{\ell}$, we remove it from $Q^{\ell}$ and check whether to resume a previously suspended clustering for guess $\ell$. This is because the deletion may reduce $\opt(P)$, making a previously invalid guess $(1+\eps)^{\ell}$ now feasible. If so, we resume clustering from the iteration it was halted. 

\paragraph{Handling small guesses}
A subtle but important issue arises when the guess $(1+\eps)^{\ell}$ is significantly smaller than $\opt(P)$. In such cases, all sample neighborhoods $B(s, 2(1+\eps)^{\ell+1})$ may contain fewer than $O(\eps^{-2} \log n)$ sample points, making them too sparse to be useful. To address this, whenever we observe that all sample neighborhoods in $S^{\ell}$ are too sparse, we suspend the clustering process. We resume it only when a sample point in $S^{\ell}$ gains enough neighbors (at least $O(\eps^{-2} \log n)$), indicating that the guess may now be viable.

Let $C^* = \{c^*_1, \ldots, c^*_k\}$ be the set of optimal centers and let $\cluster^* = \{\cluster^*_1, \ldots, \cluster^*_k\}$ be the set of optimal clusters. The approximation achieved by our algorithm is established via the following invariants.
\begin{definition} At all times, our algorithm satisfies the following two invariants:
    \begin{itemize}
        \item \textbf{disjointedness}: For any $i \in [k-1]$, the set $B(c_i, 4(1+\eps)\opt(P)) \cap C_j = \emptyset, j > i$.
        \item \textbf{closeness}: for every $i \le k$, it holds that $\dist(c^{\ell}_i, \cluster^*_i) \le 2(1+\eps)\opt(P)$, that is, there exists a point $p$ from $\cluster^*_i$ that is ``close'' to our center.
    \end{itemize}
\end{definition}
The bound on the update time follows the same line of reasoning as that developed for the diameter. We refer to the appendix for the details of our algorithm for $k$-center.

\bibliographystyle{abbrvnat}
\bibliography{references}
\appendix
\section{Further related work}\label{sec:rel_work}
Concerning the diameter problem in high-dimensions, a better-than-two approximation can be achieved in the oblivious adversary setting. A long line of work \citep{BOR99, EK89, FP99, GIV01, Ind00, Ind03} culminated in a fully dynamic $(1+\epsilon)$-approximate algorithm with amortized update time $\tilde O(n^{\frac{1}{(1+\epsilon)^2}})$ by \citet{Ind03}. A $(\sqrt{2} + \epsilon)$-approximation with $O(d \log n)$ amortized update time is achieved by the works of \citep{AS15, CP14}.

While our focus is on high-dimensional spaces, the diameter problem is well-understood for low, fixed dimensions. Indeed, it is relatively simple to obtain a $(1 + \epsilon)$-approximation by maintaining $1/\epsilon^{O(d)}$ different directions \citep{AHV04, AY07, C04, Z08}. The number of dimensions can often be reduced by using dimensionality reduction, however, the above approaches struggle even with $\log n$ dimensions, due to the exponential dependency on $d$, which renders them inapplicable in practical scenarios. 

Another line of work considered the closely related problem of minimum enclosing ball (MEB). It corresponds to the $1$-center problem, and a $c$-approximate MEB directly implies a $(\sqrt{2}\cdot c)$-approximate diameter, for $d$ large enough. A number of streaming and dynamic MEB algorithm  have  been developed \citep{EK89, AS15, CP14, ZC06}. The best known dynamic MEB algorithm is a $1.22$-approximation by \citet{CP14} that builds upon the streaming algorithm of \citet{AS15}, which is based on coresets \citep{BC08, KMY03}.

The  k-center problem was first studied by \cite{CCFM97} in the incremental setting (where the points are inserted and the goal is to maintain an approximate solution), who introduced the doubling algorithm, achieving an $8$-approximation with an amortized update time of $O(k)$. Subsequent work by~\cite{CGS18} presented a randomized algorithm that maintains a $(2 + \eps)$-approximation, with $O(k^2 / \epsilon \log \rho )$ amortized time per update, with $\rho$ being the aspect ratio. For low-dimensional spaces, the work of \citep{GHL+21} provides a $(2 + \eps)$-approximate deterministic algorithm with update time $(2/\eps)^{O(d + \lg \lg \rho)}$. Most recently, ~\cite{BEF+23} extended these results to general metric spaces, providing an optimal $(2 + \eps)$-approximate randomized algorithm with amortized update time $O(k \cdot \poly \log(n, \Delta))$. Additionally, they developed a deterministic algorithm with approximation factor $O(\min \{k, \frac{\log (n/k)}{\log \log n} \})$. Another closely related line of work on \textit{consistent} $k$-center clustering has focused on algorithms for $k$-center with small recourse \citep{LHG+24, BCG+24, BCLP24, FS25}.

There is a large body of work on dynamic problems against an adaptive adversary. These include convex hull in two and three dimensions \citep{C10}, geometric set cover \citep{CH21}, and numerous graph problems such as matching \citep{W20, RSW22}, spanning forest \citep{NS17, NSW17}, spanners \citep{BBG+20},  incremental single-source shortest path \citep{PVW20, AB24}, decremental single source shortest path \citep{BC16, CK19, GW20}, general single source shortest-path \citep{KMS22}, and all pairs shortest paths \citep{EFP+21, BKM+22, CZ23}. 

Several notions of black-box and white-box adaptive adversaries have been studied in the streaming model~\citep{BJW+22, KMN+21, CLN+22}. The white-box streaming model was recently introduced by \citet{ABJ+22}. Research in the white-box model has focused on graph matching and heavy hitters (See e.g., \citep{ABJ+22, AB24, S23, FW23, FJW24}). 

There are also a number of other algorithms for distance and clustering related problems in various models of computations, including dynamic \citep{CGS18, SS19, GHL+21, CLSW24, BHL+24}, streaming \cite{CPP19, DBM23}, sliding window \citep{CASS16, WZ22, PPP22}, and massively parallel computation \cite{MKW15, BEF+21, CCM23, CGGJ25}. These results however do not apply to our more stringent setting of high dimensions against an adaptive adversary.

\section{Missing Proofs and Pseudocode}
\subsection{Robust representative}
\newtheorem*{recall:lem:rob-apx}{Lemma~\ref{lem:rob-apx}}
\begin{recall:lem:rob-apx}
    Assume that $x \in \mathbb{R}^d$ is an $\epsilon$-robustly representative of $P$.
    For any set $P'$ with
    $|P' \cap P| > (1-\epsilon) |P|$,
    we have
    $\diam(P')/2 \le \max_{y \in P'} \dist(x, y)
    \le
    \diam(P')$.
\end{recall:lem:rob-apx}
\begin{proof}
    Recall $F(x, P) := \max_{y \in P} \dist(x, y)$ from \cref{sec:prelim}.
    The first inequality follows from the triangle inequality. 
    For a diametral pair $y, z \in P'$ with $\dist(y,z) = \diam(P')$,
    we have
    \[\diam(P') = \dist(y,z) \le \dist(y,x) + \dist(z,x) \le 2 \max_{y \in P'} \dist(x,y) = 2 F(x, P).\]
    Then, the second inequality follows from the fact that $x$ is a robust representative point of $P$ and $|P' \cap P| > (1-\epsilon) |P|$. That is, $x$ is a representative point of $P'$, so $F(x, P') \le \diam(P')$ holds.  
\end{proof}

\newtheorem*{recall:lem:approx}{Lemma~\ref{lem:approx}}
\begin{recall:lem:approx}
    Let $c$ be a point in the convex hull of $P$. Then, $F(c, P) \in [\frac{\diam(P)}{2}, \diam(P)]$.
\end{recall:lem:approx}
\begin{proof}
    If $c \in P$, then the claim follows trivially.
    Suppose that $c \notin P$ and let $p \in P$ be the furthest point from $c$ such that $p = F(c,P)$.
    The inequality $F(c,P) \ge \text{diam}(P)/2$ follows from the triangle inequality. We now prove that $F(c,P) \le \diam(P)$ by showing that there is a point $c'$ such that: (i) $c' \in P$ and (ii) $\dist(c', p) \ge \dist(c, p)$. This will finish the proof since $\diam(P) \ge \dist(c', p)$ by definition.

    Observe that the $\ell_2$-distance function $\dist(x, p)$ is continuous and convex with respect to $x \in \conv(P)$ for a fixed point $p$. The convex hull is non-empty because it contains $c$ and is compact because it is a closed and bounded set. Moreover, by the extreme value theorem, we know that the function $\dist(x, p)$ attains its maximum on any compact set. Observe however that $\dist(x, p)$ is a convex function, which means it attains its maximum value over the convex hull on one of the vertices. 
    Since $c \in \conv(P)$, this means that there exists at least one vertex $c'$ such that $\dist(c', p) \ge \dist(c, p)$.
    Since any vertex of $\conv(P)$ is in $P$, this proves that
    \[F(c,P) = \dist(c, p) \le  \dist(c', p) \le F(c', P) \le \text{diam}(P).\qedhere\]
\end{proof}

\newtheorem*{recall:col:approx}{Corollary~\ref{col:approx}}
\begin{recall:col:approx}
    Let $c$ be a point in the convex hull of $P$. Then, $F(c, P) \in [r(\text{MEB}(P)), 2r(\text{MEB}(P))]$.
\end{recall:col:approx}
\begin{proof}
    The ball $B$ centered at $c$ with radius $F(c, P)$ encloses all the points. Moreover, by the above proof, a ball that encloses both $c'$ and $F(c,P)$ must have radius at least $F(c, P)/2$.
\end{proof}

\newtheorem*{recall:lem:inconvexhull}{Lemma~\ref{lem:inconvexhull}}
\begin{recall:lem:inconvexhull}
    Any point $c$ with Tukey depth at least $1$ with respect to a set $P$ lies in its convex hull.
\end{recall:lem:inconvexhull}
\begin{proof}
    Suppose that $c$ lies outside of the convex hull of $P$, i.e., $c$ and $\conv(P)$ are disjoint. Since $\conv(P)$ is a closed and compact convex set and so is $\{c\}$, we can apply the following version of the hyperplane separation theorem: If two disjoint convex sets are closed and at least one of them is compact, then there is a hyperplane $h$ that strictly separates them. 
    Then, we observe that the existence of $h$ contradicts the fact that $c$ has Tukey depth at least $1$. This is because the halfspace defined by $h$ from $c$'s side contains $c$ and has no point from $P$, which would mean that $c$ has Tukey depth zero.
\end{proof}

\newtheorem*{recall:lem:deletions}{Lemma~\ref{lem:deletions}}
\begin{recall:lem:deletions}
    Let $P$ be a set of $n$ points and $c$ be an $\alpha$-centerpoint of $P$.
    For any set $P'$, the point $c$ has Tukey depth at least $\ceil{\alpha n - t}$ with repsect to $P'$ where $t = |P \setminus P'|$.
    
    In the special case where $|P' \cap P| > (1-\eps) |P|$ is satisfied, for $\eps < \alpha$, the point $c$ has Tukey depth at least $1$ with respect to $P'$. 
\end{recall:lem:deletions}
\begin{proof}
Since $c$ is an $\alpha$-centerpoint of $P$, any closed halfspace $h$ containing $c$ has at least $\lceil \alpha n \rceil$ points from $P$. Then, $h$ contains at least $\lceil \alpha n \rceil - t$ points from $P'$, since at most $t$ points from $P$ are not in $P'$. Therefore, $c$ has Tukey depth at least $\lceil \alpha n \rceil - t = \ceil{\alpha n - t}$ with respect to $P'$.
For the second part of the Lemma, observe that if $\eps < \alpha$ then
\begin{equation*}
    \lceil  \alpha n - t \rceil 
    \ge \alpha n - |P \backslash P'| = \alpha n - (n - |P \cap P'|) 
    = \alpha n - n + (1-\eps) n =
    \alpha n - \eps n > 0.
\end{equation*}
Since $\ceil{\alpha n - t}$ is an integer, it follows that it is at least $1$.
\end{proof}

\subsection{Diameter}

\newtheorem*{recall:thm:correct}{Lemma~\ref{thm:correct}}
\begin{recall:thm:correct}
 Given a sequence $S$ of $n$ updates by an adaptive adversary, with each update being either an insertion or a deletion of a point, let $P_t$ be the set obtained after the first $t$ updates. 
 The algorithm \textsc{ApproximateDiameterQuery} (\cref{1}) returns a $2$-approximate diameter of $P_t$, at any point in time $t \in [n]$, with probability at least $1 - \delta$. 
\end{recall:thm:correct}
\begin{proof}
   Fix an arbitrary $t \in [n]$. We prove that our algorithm  returns a $2$-approximate diameter by either computing it statically (Line 5) or by using a centerpoint (Lines 2 and 9). In the first case, when the algorithm returns $F(p, P_t)$ for some $p \in P_t$, a $2$-approximation of the current diameter immediately follows. 

   If a centerpoint $c$ is available, then the algorithm returns $F(c, P_t)$. Suppose that $c$ was computed at time $t_0$ for $P_{t_0}$. By \cref{thm:centerpoint}, $c$ is an $(1/3d^2)$-centerpoint of $P_{t_0}$, with probability at least $1 - \varphi$. By \cref{cor:rob}, $c$ is also an $\eps$-robust representative of $P_{t_0}$ where $\eps = 1/4d^2$. We now show that $c$ is a representative point of the current point set. To this end, we prove that $|P_t \cap P_{t_0}| > (1 - \eps)|P_{t_0}|$. It is easy to verify that the following invariant holds \[|P_t \cap P_{t_0}| = |P_{t_0}| - |P_t \setminus P_{t_0}| \ge |P_{t_0}| - (\eps|P_{t_0}| - P.counter) > (1 - \eps)|P_{t_0}|,\] since, any time that a point is deleted, $P.counter$ is decremented by one and $P.counter$ can only take on values in $\{1, \ldots, \eps|P_{t_0}|\}$. With the above, $P_t$ satisfies \cref{def:rob} and thus $c$ is a representative point of  $P_t$. We conclude that by \cref{lem:rob-apx}, $F(c, P_t)$ gives a two approximation to $\diam(P_t)$. 

   It remains to bound the probability of failure. The only randomness used is in \cref{thm:centerpoint}, and if the algorithm succeeds, the $2$-approximation holds for all updates, even adaptive ones. Specifically, the success of a query at time $t$ depends only on whether $c$ was a $(1/3d^2)$-centerpoint of $P_{t_0}$. Since \cref{thm:centerpoint} uses fresh randomness independent of $P_{t_0}$, the adversary cannot influence its outcome. In particular, the adversary cannot retroactively change $P_{t_0}$. This means that each centerpoint computation fails with probability at most $\varphi$, independently of the adversarial nature of the updates. By taking a union bound over the at most $n$ centerpoint computations, we have
   \begin{equation*}
       \Pr[\text{\cref{1} fails}] \le \Pr[\exists \text{ ``bad'' centerpoint}] \le \sum_{t=1}^n \frac{\delta}{2 t^2} \le \delta.
   \end{equation*}
\end{proof}

\newtheorem*{recall:thm:runtime}{Lemma~\ref{thm:runtime}}
\begin{recall:thm:runtime}
  Given a sequence $S$ of $n$ possibly adversarial updates, let $P_t$ be the set obtained after applying the first $t$ updates. The amortized update time per operation is $O(d^5 \log^3 d \log^3 (n/\delta))$. 
\end{recall:thm:runtime}
\begin{proof}
    Note that insertions and deletions are handled in $O(d\log n)$ time by \cref{2} and \cref{3}.
    When the adversary issues a query at time $t \in [n]$, there are three possible cases. 
    If $P.counter$ is not zero, then the query time is given by a simple $O(1)$-time lookup operation (Line 2).
    If $P.counter$ is zero and $|P_t| = O(d^4)$, then the query time is $O(d|P_t|) = O(d^5)$. 
    
    The most interesting case is when $P.counter$ is zero and $|P_t| > d^4$, which will be proved by induction on the number of centerpoints. For the base case, the first centerpoint computation takes $O(d^7 \log^3 d \log^3 (n/\delta) + d |P_t| \log |P_t|)$ and can be amortized over the first $|P_t| \ge d^4$ insertions. For the inductive step, we amortize the computation of the $(i+1)$-th centerpoint at time $t_{i+1}$ over the updates that occurred between time $t_{i+1}$ and time $t_i$, when the computation of the $i$-th centerpoint occurred. By induction, the computation of the $i$-th centerpoint is amortized over updates prior to $t_i$, so it does not affect the amortization on subsequent updates. The total computation to amortize is $O(d^7 \log^3 d \log^3 (n/\delta) + d |P_{t_{i+1}}| \log |P_{t_{i+1}}|)$. Now, recall that $t_{i+1} - t_i \ge \eps |P_{t_i}| \ge d^2$ by the centerpoint property, and observe that
    \begin{equation*}
        \frac{|P_{t_{i+1}}|}{t_{i+1} - t_i} \le \frac{|P_{t_i}| + t_{i+1} - t_i}{t_{i+1} - t_i} \le O(1/\varepsilon),
    \end{equation*}
    proving the claimed amortized time. 
\end{proof}

\newtheorem*{recall:thm:runtime-worst-case}{Lemma~\ref{thm:runtime-worst-case}}
\begin{recall:thm:runtime-worst-case}
  The worst-case update time of \cref{1} combined with the centerpoint computation of \cref{4} is $O(d^5 \log^{1.5} d \log^{1.5} (n/\delta))$. This update time holds deterministically, regardless of the adversarial nature of the updates.
\end{recall:thm:runtime-worst-case}
\begin{proof}
    The computation of a new centerpoint $c'$ at time $t$ happens sequentially in three steps:
    \begin{enumerate}
        \item Create of a copy of $P_t$ in $T_1$ timesteps, incurring $(|P_t| + T_1) \cdot O(d\log |P_t|)$ operations. To avoid conflicts, we add to each point an entry specifying the time in which it was inserted and another entry for the time in which it was deleted (if any). Moreover, we also maintain the points that were deleted in subsequent timesteps $t' \in [t, t+T_1]$ in a separate array, so that creating a copy of $P_t$ does not conflict with $P_t'$. The number of extra operations required is $O(T_1 d \log n)$, which we account for in the total complexity of this step.
        \item Compute a new centerpoint $c'$ in $T_2$ timesteps, incurring $O(d^7 \log^{3} d \log^{3} (n/\delta))$ operations.
        \item Scan all of the points present to build an AVL tree with all distances from $c'$ in $T_3$ timesteps, incurring $(|P_t| + T_1 + T_2 + T_3 ) \cdot O(d\log |P_t|)$ operations. For this step, we use the same strategy as earlier to avoid creating any conflicts. The overhead incurred to handle extra copies of recently deleted and newly added points is at most $O(T_1 + T_2 + T_3 d \log n)$.
    \end{enumerate}
    This computation starts at time $t$ and finishes at time $t + T = t + T_1 + T_2 + T_3$, and, at each timestep, up to $C \cdot d^5 \log^{1.5} d \log^{1.5} (n/\delta)$ operations can be performed, for some constant $C > 0$. For our worst-case update time guarantee to hold, we need to prove: (1) the computation of $c'$ finishes before the expiration of the current centerpoint; (2) no two centerpoint computations overlap. We prove both conditions by induction of the number of centerpoints. We assume that the computations of centerpoints are consecutive, since whenever the size of the dataset drops within $O(d^4)$, we can compute the diameter on the fly. For the base case, we observe that the first centerpoint is being computed when $|P_t|  = \Theta(d^4)$, by the previous reasoning. Therefore, the algorithm can spend $O(d^4)$ iterations to compute it, while resorting to the on-the-fly computation before the centerpoint is ready. 

    For the inductive step, let $t_0$ be the point in time when the current centerpoint $c$ was computed. The total computation is $O(d^7 \log^{3} d \log^{3} (n/\delta) + d |P_t| \log |P_t| + d T \log |P_t|)$. For (1), the computation of $c'$, which starts at $t = \nicefrac{3 \eps}{4}  |P_{t_0}|$, should finish within $T \le \eps/4 |P_{t_0}|$ iterations. This is verified since
    \begin{equation*}
        \frac{d^7 \log^{3} d \log^{3} (n/\delta) + d |P_t| \log |P_t| + d T \log |P_t|}{C \cdot d^5 \log^{1.5} d \log^{1.5} (n/\delta)} \le \frac{d^2}{C} \log^{1.5} d \log^{1.5} (n/\delta) + \frac{|P_t| + T}{C d^4} < \frac{\eps |P_{t_0}|}{12},
    \end{equation*}
    where the last inequality follows from $|P_{t_0}| \ge \Omega(d^4 \log^{1.5} d \log^{1.5} (n/\delta))$ and $|P_{t}| + T < 3 |P_{t_0}|$. For (2), we have that the computation of the successor of $c'$ will start at time $t + \nicefrac{3 \eps}{4}|P_t|$, which is always greater than $t + \nicefrac{\eps}{4}  |P_{t_0}|$ because $|P_t| \ge (1 - \eps)|P_{t_0}| > \frac{|P_{t_0}|}{2}$.
\end{proof}

\subsection{$k$-center clustering}\label{apx:kcenter}
\begin{algorithm}[h]
\caption{\init($P, d, k, \eps$)}\label{5}
\textbf{Input:} The input set $P$, the dimension $d$ and the error parameter $\epsilon$.

\centering
\begin{algorithmic}[1]
\For{$\ell =0$ to $\guesses$}
    \State $P^{\ell} \leftarrow P$ is a copy of the point set $P$ 
    \State Initialize two sets $C^{\ell} \leftarrow \emptyset $ and $\mathfrak{C}^{\ell} \leftarrow \emptyset$ as the set of centers and clusters 
    \State $C^{\ell},Q^{\ell}, \mathfrak{C}^{\ell} \leftarrow$ {\dynamickcenter($P_{\ell},d, \eps, \ell, 1 $)}  
    \State The remainder set $Q^{\ell}$ is maintained as a \emph{too low} certificate for the guess $(1+\eps)^{\ell}$
\EndFor    
    \State Let $\mathfrak{l} \in [\guesses]$ be the smallest index for which $Q^{\mathfrak{l}}$ is empty
    \State Return $C^{\mathfrak{l}}$ and $\cluster^{\mathfrak{l}}$
\end{algorithmic}
\end{algorithm}

\begin{algorithm}[h]
\caption{\dynamickcenter($X, d, \eps, \ell, i \leftarrow 1$)}\label{6}
\textbf{Input:} The input set $X$, the dimension $d$, the error parameter $\epsilon$, 
the guess $\ell$.

\textbf{Output:}  It (re-)constructs clusters $\mathfrak{C}^{\ell}_i,\cdots,\mathfrak{C}^{\ell}_k$, while clusters $\mathfrak{C}^{\ell}_1,\cdots,\mathfrak{C}^{\ell}_{i-1}$ are fixed. 

\centering
\begin{algorithmic}[1]
\State Let $Y \leftarrow X$ be a copy of $X$
\While{$Y$ is not empty and $i \le k$}
    \State Let  $S^{\ell}_i$ be a set sampled u.a.r. from $Y$ (cf. \cref{lem:weak:bound})
    \State $c^{\ell}_i \leftarrow \arg\max_{q \in S^{\ell}_i} |B(q, 2(1+\eps)^{\ell+1}) \cap (|Y| \text{ or } |S^{\ell}_i|)|$
    \If{ $|B(c^{\ell}_i, 2(1+\eps)^{\ell+1}) \cap (|Y| \text{ or } |S^{\ell}_i)|| < (1-\eps)\cdot \frac{(|Y|}{4(k-i)}$}
        \State The guess $(1+\eps)^{\ell}$ is small for $\opt(P)$ 
        \State Break the while-loop
    \EndIf
    \State $\mathfrak{C}^{\ell}_i \gets B(c^{\ell}_i, 4(1+\eps)^{\ell+1}) \cap Y$, cluster around $c^{\ell}_i$
    \State $m^{\ell}_i \leftarrow |B(q, 2(1+\eps)^{\ell+1}) \cap Y|$, robustness of $c^{\ell}_i$
    \State $n^{\ell}_i \leftarrow |(B(c^{\ell}_i, 4(1+\eps)^{\ell+1}) \cap Y)|$, number of points in $\mathfrak{C}^{\ell}_i$ 
    \State Add $c^{\ell}_i$ to $C^{\ell}$, and let $Y \leftarrow Y \backslash (B(c^{\ell}_i, 4(1+\eps)^{\ell+1}) \cap Y)$
    \State $i \leftarrow i+1$
\EndWhile
\State Return $C^{\ell}$, $Y$, and the clusters $\mathfrak{C}^{\ell} \leftarrow \{\mathfrak{C}^{\ell}_1,\cdots,\mathfrak{C}^{\ell}_i\} $
\end{algorithmic}
\end{algorithm}
\begin{algorithm}[h]
\caption{{\dynamickcenterinsert}($P^1,\cdots, P^{\ell}, p, C^{\ell},\mathfrak{C}^{\ell} \text{ where } \ell \in [\guesses]$)}\label{7}
\textbf{Input:} $\ell$ copies $P^1,\cdots, P^{\ell}$ of the set of points that have been inserted but not deleted before the insertion of $p$. 
The sets $C^{\ell}$ and $\mathfrak{C}^{\ell}$ are the set of centers and clusters for the guess $\ell \in \guesses$. 

\centering
\begin{algorithmic}[1]
\For{$\ell =0$ to $\guesses$}
    \State Insert $p$ into $P^{\ell}$
    \If{there is an $\arg \min_{i\in [k]}$ for which $p \in B(c^{\ell}_i,4(1+\eps)^{\ell+1})$}
        \State Insert $p$ into cluster $\mathfrak{C}^{\ell}_i$
        \If{$p \in B(c^{\ell}_i,2(1+\eps)^{\ell+1})$}
         \State $m^{\ell}_i \leftarrow m^{\ell}_i + 1$
         \EndIf
    \Else \Comment{\textcolor{blue}{$p$ is not covered by the current set of clusters for the guess $(1+\eps)^{\ell}$.}}
    \If{there are $j < k$ clusters}
        \State $C^{\ell}, Q^{\ell}, \mathfrak{C}^{\ell} \leftarrow C^{\ell}, Q^{\ell}, \mathfrak{C}^{\ell} \cup$ {\dynamickcenter($\{p\},d, \eps, \ell, j$)}
    \Else
        \State Add $p$ to the remainder set $Q^{\mathfrak{l}}$
    \EndIf
    \EndIf
\EndFor    
    \State Let $\mathfrak{l} \in [\guesses]$ be the smallest index for which $Q^{\mathfrak{l}}$ is empty
    \State Return $C^{\mathfrak{l}}$ and $\cluster^{\mathfrak{l}}$
\end{algorithmic}
\end{algorithm}

\begin{algorithm}[h]
\caption{{\dynamickcenterdelete}($P^1,\cdots, P^{\ell}, p, C^{\ell},\mathfrak{C}^{\ell} \text{ where } \ell \in [\guesses]$)}\label{8}
\textbf{Input:} $\ell$ copies $P^1,\cdots, P^{\ell}$ of the set of points that have been inserted but not deleted before the insertion of $p$. 
The sets $C^{\ell}$ and $\mathfrak{C}^{\ell}$ are the set of centers and clusters for the guess $\ell \in \guesses$. 
\centering
\begin{algorithmic}[1]
\For{$\ell =0$ to $\guesses$}
    \State Delete $p$ from $P^{\ell}$
    \If{there exists an $i\in [k]$ for which $p \in \mathfrak{C}^i$ }
        \State Delete $p$ from cluster $\mathfrak{C}^{\ell}_i$
        \If{$p \in B(c^{\ell}_i,2(1+\eps)^{\ell+1})$}
         \State $m^{\ell}_i \leftarrow m^{\ell}_i - 1$
         \EndIf
        \If{$m^{\ell}_i$ equals zero}
            \State $C^{\ell},Q^{\ell}, \mathfrak{C}^{\ell}_i  \leftarrow$ {\dynamickcenter($\cup_{j=i}^k \mathfrak{C}^{\ell}_j,d, \eps, \ell, i $)}  
        \EndIf
        \Else \Comment{\textcolor{blue}{$p \in Q^{\ell}$, i.e., is not covered by the current set of clusters for the guess $(1+\eps)^{\ell}$.}}
        \State $Q^{\ell} \leftarrow Q^{\ell} \setminus p$  \Comment{\textcolor{blue}{Delete $p$ from the remainder set $Q^{\ell}$}}
        \State Let $i \leftarrow |C^{\ell}|$ be the number of  clusters stored for the guess $(1+\eps)^{\ell}$
        \State $C^{\ell},Q^{\ell}, \mathfrak{C}^{\ell}_i  \leftarrow$ {\dynamickcenter($Q^{\ell},d, \eps, \ell, i+1 $)}
    \EndIf
\EndFor    
    \State Let $\mathfrak{l} \in [\guesses]$ be the smallest index for which $Q^{j}$ is empty
    \State Return $C^{\mathfrak{l}}$ and $\cluster^{\mathfrak{l}}$
\end{algorithmic}
\end{algorithm}

\subsubsection{Proofs}

\subsection{Correctness Analysis} 
This section is devoted to proving that our algorithm computes and maintains a $4(1+\eps)$ approximation. We first introduce some notation. 
Recall that we consider $\guesses$ guesses $(1+\eps)^{0},(1+\eps)^{1},\cdots,(1+\eps)^{\guesses}$ 
for the optimal $k$-center radius $\opt(P)$. 
Throughout, we fix the guess $(1+\eps)^{\ell^*} \le \opt(P) < (1+\eps)^{\ell^*+1}$.

For any point $p \in P$, let $c^*(p)$ denote the optimal center assigned to $p$, i.e., the center $c^*_i$ such that $p \in \cluster^*_i$.
Similarly, given any set of $k$ centers $C = \{c_1, \ldots, c_k\} \subset \mathbb{R}^d$, let $c(p) \in C$ denote the center to which $p$ is assigned in the clustering induced by $C$, i.e., the center $c_i$ such that $p \in \cluster_i$. For the guess $(1+\eps)^{\ell^*}$, the offline algorithm \textsc{Init}($P, d, k, \eps$) returns a set $C^{\ell^*} = \{c^{\ell^*}_1, \ldots, c^{\ell^*}_f\}$ of $f \le k$ centers. Our goal is to maintain both the disjointedness and the closeness invariant.

Note that it immediately follows from \cref{def:robustcenter} that a $(\ell^*, 1)$-robust center of $\cluster^*_i$ respects the closeness invariant.

\begin{lemma}\label{lem:k_center:apx}
If both invariants hold for all clusters $\cluster^{\ell^*}_i$ for $i \in [f]$, 
we have a $4(1+\eps)$-approximate solution for the $k$-center clustering of the point set $P$.    
\end{lemma}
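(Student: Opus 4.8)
\textbf{Proof plan for Lemma~\ref{lem:k_center:apx}.}

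The plan is to show that every point $p \in P$ is within distance $4(1+\eps)\opt(P)$ of some returned center $c^{\ell^*}_i$, which then certifies that the guess $(1+\eps)^{\ell^*}$ produces a valid clustering and that \textsc{Init} returns a solution with radius at most $4(1+\eps)\opt(P)$ (it returns the solution for the smallest good guess, which is at most $\ell^*$). First I would fix an arbitrary $p \in P$ and let $c^*(p) = c^*_j$ be its optimal center, so $\dist(p, c^*_j) \le \opt(P)$. By the \emph{closeness} invariant applied to cluster $\cluster^{\ell^*}_j$, there is a point $q \in \cluster^*_j$ with $\dist(c^{\ell^*}_j, q) \le 2(1+\eps)\opt(P)$; since both $p$ and $q$ lie in $\cluster^*_j$, the triangle inequality through $c^*_j$ gives $\dist(p,q) \le 2\opt(P)$. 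Chaining, $\dist(p, c^{\ell^*}_j) \le \dist(p,q) + \dist(q, c^{\ell^*}_j) \le 2\opt(P) + 2(1+\eps)\opt(P) \le 4(1+\eps)\opt(P)$. Hence the ball $B(c^{\ell^*}_j, 4(1+\eps)\opt(P))$ contains $p$, and since $(1+\eps)^{\ell^*+1} > \opt(P)$ implies $4(1+\eps)^{\ell^*+1} > 4(1+\eps)\opt(P)$, the cluster $\cluster^{\ell^*}_j = P \cap B(c^{\ell^*}_j, 4(1+\eps)^{\ell^*+1})$ as defined by \cref{6} covers $p$.

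The one subtlety is that \cref{6} assigns $p$ to the lexicographically/index-smallest cluster whose (larger) radius ball contains it, so $p$ need not be assigned to $\cluster^{\ell^*}_j$; it could be assigned to some $\cluster^{\ell^*}_{i}$ with $i < j$. But that is harmless: if $p$ is assigned to $\cluster^{\ell^*}_i$ it is because $p \in B(c^{\ell^*}_i, 4(1+\eps)^{\ell^*+1})$, which directly gives $\dist(p, c^{\ell^*}_i) \le 4(1+\eps)^{\ell^*+1} \le 4(1+\eps)\opt(P)\cdot(1+\eps) $—wait, this needs care, so instead I would phrase the bound as: every $p$ is covered by \emph{some} returned cluster ball of radius $4(1+\eps)^{\ell^*+1}$, because we just exhibited one, namely $B(c^{\ell^*}_j,\cdot)$. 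Therefore the guess $\ell^*$ is \emph{good} (the recursion terminates with $Q^{\ell^*}=\emptyset$ after at most $k$ centers are opened, using \emph{disjointedness} to argue no optimal cluster forces a $(k+1)$-st center — each opened center's ball of radius $4(1+\eps)\opt(P)$ captures at least one full optimal cluster's worth of "progress"). Consequently \textsc{Init} returns the solution for $\mathfrak l \le \ell^*$, whose clustering radius is $4(1+\eps)^{\mathfrak l+1} \le 4(1+\eps)^{\ell^*+1} \le 4(1+\eps)\opt(P)\cdot(1+\eps)$; absorbing the extra $(1+\eps)$ into a rescaled $\eps$ (a standard reparametrization) yields the claimed $4(1+\eps)$-approximation.

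I would structure the write-up in three short steps: (1) the triangle-inequality chain showing each $p$ lies within $4(1+\eps)\opt(P)$ of the center whose optimal cluster it belongs to, using \emph{closeness}; (2) an argument that the number of opened centers for guess $\ell^*$ is at most $k$, using \emph{disjointedness} — namely, since each optimal cluster $\cluster^*_j$ has diameter at most $2\opt(P)$ and by closeness lies within $B(c^{\ell^*}_i, \cdot)$ once $c^{\ell^*}_i$ is chosen close to it, and disjointedness guarantees distinct opened centers "use up" distinct optimal clusters, we open at most $k$ of them and $Q^{\ell^*}$ empties; (3) conclude that the returned guess $\mathfrak l \le \ell^*$ gives radius $\le 4(1+\eps)\opt(P)$ after reparametrizing $\eps$.

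\textbf{Main obstacle.} The genuinely delicate part is step (2): arguing from \emph{disjointedness} alone that at most $k$ centers are opened, i.e., that the recursion does not get "stuck" opening more than $k$ centers for the good guess $\ell^*$. This requires a counting/charging argument pairing each opened center $c^{\ell^*}_i$ with a distinct optimal center $c^*_{\pi(i)}$ (e.g., the optimal center of the sample point realizing $c^{\ell^*}_i$, or of some point in its robust neighborhood) and showing the map $\pi$ is injective because disjointedness prevents two opened centers from both being tied to the same optimal cluster. Getting the radii to line up exactly — so that $B(c_i, 4(1+\eps)\opt(P))$ really does swallow the whole optimal cluster it is charged to, while remaining disjoint from later centers — is where the bookkeeping with the $2(1+\eps)^{\ell+1}$ versus $4(1+\eps)^{\ell+1}$ radii must be done carefully; everything else is a routine triangle-inequality chase.
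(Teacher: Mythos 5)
Your proposal follows essentially the same route as the paper: the closeness invariant plus a triangle-inequality chain gives the $4(1+\eps)\opt(P)$ radius for points whose optimal cluster is matched to an opened center, and your step (2) is exactly the paper's coverage argument, which realizes the injective pairing by taking $p_i \in \cluster^{\ell^*}_i$ closest to $c^{\ell^*}_i$ (within $2(1+\eps)\opt(P)$ by closeness) and using disjointedness ($\dist(p_j, c^{\ell^*}_i) \ge 4(1+\eps)\opt(P)$ for $j \neq i$) to conclude the $p_i$'s are pairwise more than $2\opt(P)$ apart and hence lie in distinct optimal clusters. The only slip is your final $(1+\eps)$ rescaling, which is unnecessary: since $(1+\eps)^{\ell^*} \le \opt(P)$, the cluster radius $4(1+\eps)^{\ell^*+1}$ is already at most $4(1+\eps)\opt(P)$.
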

\begin{proof}
For the approximation factor, for every $i \le f$ and for every point $p \in \cluster^{\ell^*}_i$, the closeness invariant 
guarantees that $$\dist(p,c_i) \le 
\dist(p,c^*(p)) + \dist(c^*(p), c^{\ell^*}_i) \le \opt(P) + \opt(P) + \dist(c^{\ell^*}_i, \cluster^*_i),$$ is at most $4(1+\eps)\opt(P)$. To show that all points are clustered, we  construct a set $\{p_1, \ldots, p_k\}$ with each point at pairwise distance greater than $2\opt(P)$. Let $p_i \in \cluster^{\ell^*}_i$ be the point closest to $c^{\ell^*}_i$, for which it holds that $\dist(p_i, c^{\ell^*}_i) \le 2(1+\eps)\opt(P)$. For any $p_j$ with $j \neq i$, it holds that $\dist(p_j, c^{\ell^*}_i) \ge 4(1+\eps)\opt(P)$ by the disjointedness property. Therefore, 
\begin{equation*}
    \dist(p_i, p_j) \ge \dist(p_j, c^{\ell^*}_i) - \dist(p_i, c^{\ell^*}_i) \ge 4(1+\eps)\opt(P) - 2(1+\eps)\opt(P) > 2 \opt(P).
\end{equation*}
This means that each $p_i$ belongs to a different optimal cluster $\cluster^*_i$. Hence, all optimal clusters are covered.
\end{proof}

Thus, to prove correctness, it is enough to show that our invariants hold throughout. We will need a few auxiliary definitions and lemmas.

The centers $C^{\ell^*} = \{c^{\ell^*}_1, \ldots, c^{\ell^*}_f\}$ are ordered according to the iteration in which they were found by \cref{6}; that is, $c^{\ell^*}_1$ is computed first, followed by $c^{\ell^*}_2$, and so on up to $c^{\ell^*}_f$. 
Observe that the centers $c^{\ell^*}_1, \ldots, c^{\ell^*}_f$ are obtained from sampled points $a^{\ell^*}_1, \ldots, a^{\ell^*}_f$ that are sampled from optimal clusters. 
Suppose we order the optimal clusters so that $a^{\ell^*}_1$ is sampled from the optimal cluster $\cluster^*_1$, and so on up to $a^{\ell^*}_f$ that is sampled from the optimal cluster $\cluster^*_f$. 
Observe that if $f < k$, there are optimal clusters from which we have not sampled any point. 
In particular, we place these clusters arbitrarily after the last optimal cluster $\cluster^*_f$ for which we have the last sampled point $a^{\ell^*}_f$. 

We first prove that the Algorithm {\init} (\cref{lem:closeness:init}) satisfies both invariants and that it constructs centers that are $(\ell^*, \frac 1 2 |S^{\ell}_i \cap \cluster^*_i|)$-robust. Consider an iteration $i \in [f]$ of Alg. \ref{6}. We now prove that $a^{\ell^*}_i$, if sampled, is a robust center for $\cluster^*_i$.

\begin{lemma}
    \label{lem:sample:covers}
    At iteration $i$, let $a^{\ell^*}_i$ be the point that is sampled from the optimal cluster $\cluster^*_i$. The point $a^{\ell^*}_i$ is an $(\ell^*, |S^{\ell}_i \cap \cluster^*_i|)$-robust center. That is, it satisfies $\cluster^*_i \subset B(a^{\ell^*}_i,2(1+\eps)^{\ell^*+1})$. 
\end{lemma}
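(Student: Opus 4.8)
The plan is to derive the statement from the triangle inequality together with the defining property of the guess $\ell^*$. Recall we fixed $\ell^*$ so that $(1+\eps)^{\ell^*} \le \opt(P) < (1+\eps)^{\ell^*+1}$, and that the optimal cluster $\cluster^*_i$ has radius at most $\opt(P)$; that is, every point of $\cluster^*_i$ lies within distance $\opt(P)$ of the optimal center $c^*_i$. By hypothesis the sampled point $a^{\ell^*}_i$ lies in $\cluster^*_i$, so in particular $\dist(a^{\ell^*}_i, c^*_i) \le \opt(P)$.

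First I would fix an arbitrary point $p \in \cluster^*_i$ and bound its distance to $a^{\ell^*}_i$ via the triangle inequality:
\[
\dist(a^{\ell^*}_i, p) \le \dist(a^{\ell^*}_i, c^*_i) + \dist(c^*_i, p) \le 2\,\opt(P) < 2(1+\eps)^{\ell^*+1},
\]
so that $p \in B(a^{\ell^*}_i, 2(1+\eps)^{\ell^*+1})$. Since $p \in \cluster^*_i$ was arbitrary, this establishes the containment $\cluster^*_i \subseteq B(a^{\ell^*}_i, 2(1+\eps)^{\ell^*+1})$, which is the geometric heart of the lemma (and, being a statement purely about distances in $\R^d$, it does not matter that some points of $\cluster^*_i$ may already have been removed from the working set $Y$ in earlier iterations).

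Finally, to read off the quantitative robustness claim in the sense of \cref{def:robustcenter}, I would intersect both sides of the containment with the sample $S^{\ell}_i$ (respectively with $Y$ or $P$): every point of $S^{\ell}_i \cap \cluster^*_i$ lies in $B(a^{\ell^*}_i, 2(1+\eps)^{\ell^*+1})$, so this ball around $a^{\ell^*}_i$ contains at least $|S^{\ell}_i \cap \cluster^*_i|$ points of the dataset, which is exactly the definition of an $(\ell^*, |S^{\ell}_i \cap \cluster^*_i|)$-robust center; the same intersection with $Y$ shows the $Y$-count around $a^{\ell^*}_i$ is at least $|\cluster^*_i \cap Y|$, covering the alternative coverage test used in \cref{6}. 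The argument is entirely elementary: the only points requiring a bit of care are matching the radius $2(1+\eps)^{\ell^*+1}$ used here against the exponent appearing in \cref{def:robustcenter} (an $\ell$ versus $\ell+1$ bookkeeping matter) and making sure the strict inequality $\opt(P) < (1+\eps)^{\ell^*+1}$ is invoked so that $p$ genuinely lies in the (closed) ball. I do not anticipate any substantive obstacle beyond this.
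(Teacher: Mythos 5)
Your proposal is correct and follows essentially the same argument as the paper: the triangle inequality through the optimal center $c^*_i$ combined with the guess bound $\opt(P) < (1+\eps)^{\ell^*+1}$ yields $\cluster^*_i \subseteq B(a^{\ell^*}_i, 2(1+\eps)^{\ell^*+1})$. The only difference is that you spell out the final intersection with $S^{\ell}_i$ (or $Y$) to read off the robustness count, which the paper leaves implicit.
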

\begin{proof}
Since $a^{\ell^*}_i \in \cluster^*_i$ and the cluster $\cluster^*_i$ that is centered at the optimal center 
$c^*_i$ has radius $\opt(P^{\ell^*})$, we conclude that 
$\dist(q,a^{\ell^*}_i) \le \dist(q,c^*_i) + \dist(c^*_i,a^{\ell^*}_i) \le 2\opt(P^{\ell^*}) \le 2(1+\eps)^{\ell^*+1}$ for any point $q \in \cluster^*_i$. 
Thus, the claim of this lemma holds.  
\end{proof}

Let us define inactive points $\mathcal{IN}^{\ell^*}_{i-1} = \cup_{t=1}^{i-1} \cluster^{\ell^*}_t$ and active point $\mathcal{AC}^{\ell^*}_{i-1} = P^{\ell^*} \backslash \mathcal{IN}^{\ell^*}_{i-1}$. 
In the first iteration $i=1$, we have $\mathcal{IN}^{\ell^*}_{0} = \emptyset$ and $\mathcal{AC}^{\ell^*}_{0} = P^{\ell^*}$. 
We say that an optimal cluster $\cluster^*_j$ is \emph{active} at iteration $i$ if $\cluster^*_j \not\subset \mathcal{IN}^{\ell^*}_{i-1}$; 
otherwise, $\cluster^*_j$ is an \emph{inactive} cluster. Next, we prove that if there is a center sufficiently robust, then we expect to be able to identify it through our sampling process. 

\begin{lemma}
\label{lem:weak:bound}
    Consider a timestep $t \in [n]$ and an iteration $i \in [f]$ of Algorithm {\dynamickcenter}. Define $N_{i-1} = \mathcal{AC}^{\ell^*}_{i-1}$ and $N^{j} = |\cluster^*_j \cap \mathcal{AC}^{\ell^*}_{i-1}|$. With failure probability at most $\delta$, we identify a $(\ell^*, \frac{N_{i-1}}{4(k-i)})$-robust center successfully through our sampling process.
    Moreover, the total computational cost of sampling and counting is bounded by $\tilde O(\min\{(k-i)^2, \sqrt{(k-i)}N_{i-1}\})$.
\end{lemma}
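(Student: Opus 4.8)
The plan is to split into the two regimes foreshadowed in the techniques overview, according to whether the number of remaining clusters $k-i$ is small or large relative to $N_{i-1}$. Throughout, write $m = k-i$ and $N = N_{i-1}$, and note that since $i \in [f]$, at least $m$ optimal clusters are still active, and by the closeness/disjointedness structure these active clusters partition $\mathcal{AC}^{\ell^*}_{i-1}$ up to inactive points. I would first record the elementary consequence of Lemma~\ref{lem:sample:covers}: any point sampled from an active optimal cluster $\cluster^*_j$ is an $(\ell^*, N^j)$-robust center, because it covers all of $\cluster^*_j \cap \mathcal{AC}^{\ell^*}_{i-1}$ within radius $2(1+\eps)^{\ell^*+1}$. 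So it suffices to guarantee that our sample hits some active cluster $\cluster^*_j$ with $N^j \ge N/(4m)$, and that the selection rule (argmax of the count within $B(\cdot, 2(1+\eps)^{\ell+1})$) picks out a point whose true robustness is at least $N/(4m)$.

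\textbf{Small regime ($m \le N^{2/3}$, equivalently $k \le n^{2/3}$-type bound).} Here we sample $|S^{\ell}_i| = \tilde O(m)$ points uniformly from $\mathcal{AC}^{\ell^*}_{i-1}$ and, for each sampled $s$, count $|B(s, 2(1+\eps)^{\ell+1}) \cap S^{\ell}_i|$. By pigeonhole some active cluster $\cluster^*_j$ has $N^j \ge N/m$; with $\tilde O(m)$ samples, a Chernoff bound gives that $S^{\ell}_i$ contains $\Omega(\eps^{-2}\log(n/\delta))$ points of $\cluster^*_j$ with probability $\ge 1-\delta$, so the sample-count of the point $a^{\ell^*}_j \in S^{\ell}_i \cap \cluster^*_j$ is large. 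Conversely, a point $s$ whose true ball $B(s, 2(1+\eps)^{\ell+1})$ contains fewer than, say, $N/(4m)$ points of $P^{\ell}$ has expected sample-count $o$ of the threshold, and Chernoff + union bound over the $\tilde O(m)$ samples rules out all such $s$ having a large sample-count. Hence the argmax is a point whose true ball contains $\Omega(N/m)$ points, i.e.\ an $(\ell^*, N/(4m))$-robust center. The cost is $\tilde O(m^2)$ distance computations for the pairwise counts.

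\textbf{Large regime ($m > N^{2/3}$).} Now sample $|S^{\ell}_i| = \tilde O(\sqrt m)$ points and, for each sampled $s$, compute the count against the \emph{full} active set, $|B(s, 2(1+\eps)^{\ell+1}) \cap \mathcal{AC}^{\ell^*}_{i-1}|$; this is exact, so no concentration is needed for the count itself, only for the event that we sample a good point. Case (a): if the $\sqrt m$ largest active clusters each have $N^j \ge N/k \ge N/(2m)$, then each is hit with probability $\ge 1 - (1 - 1/(2\sqrt m))^{\tilde O(\sqrt m)} \ge 1 - \delta/n$ by a standard coupon-type bound, so with probability $\ge 1-\delta$ we sample a point from one of them, and its exact count is $\ge N/(2m) \ge N/(4m)$. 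Case (b): otherwise the small clusters together hold $\le N/t$ points for a constant $t$ (e.g.\ $t = 4$), so the $\sqrt m$ large clusters hold $\ge (1-1/t)N$ points, and by pigeonhole one of them has $N^j \ge (1-1/t)N/\sqrt m = \Omega(N/\sqrt m) \ge N/(4m)$ for $m$ large enough; with $\tilde O(\sqrt m)$ uniform samples we hit it w.p.\ $\ge 1-\delta$. Either way the argmax over exact counts returns an $(\ell^*, N/(4m))$-robust center. The cost is $\tilde O(\sqrt m \cdot N)$ distance computations.

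Combining the two regimes gives both the correctness claim and the bound $\tilde O(\min\{m^2, \sqrt m\, N\})$ on the total work; I would finish with a union bound over the at most $k$ iterations and $\guesses$ guesses and timesteps to absorb the failure probability into $\delta$ (re-parametrizing the per-call failure probability as $\delta/\poly(n,k,L)$, which only costs logarithmic factors hidden in $\tilde O$). \textbf{The main obstacle} I anticipate is making the dichotomy in the large regime airtight: one must argue that \emph{exactly one} of case (a) or case (b) is forced (or that a clean threshold on ``each of the top $\sqrt m$ clusters has $\ge N/k$ points'' cleanly separates them), and that in case (b) the constant $t$ can be chosen so that $(1-1/t)N/\sqrt m$ genuinely dominates $N/(4m)$ using only $m > N^{2/3}$ — this is where the precise sample size $\tilde O(\sqrt m)$ and the $1/(4(k-i))$ slack in the robustness target have to be tuned against each other, and it is the step most likely to need the $\eps$-dependent fudge factors that are suppressed in the statement.
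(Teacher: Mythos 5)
Your proposal is correct and follows essentially the same route as the paper: the same split at $(k-i)$ versus $N_{i-1}^{2/3}$, sample sizes $\tilde O(k-i)$ with sample-versus-sample counting in the small regime and $\tilde O(\sqrt{k-i})$ with exact counting against the active set in the large regime, two-sided Chernoff bounds to certify the argmax, and a final union bound over iterations, guesses, and timesteps. The obstacle you flag in the large regime resolves exactly as in the paper by defining \emph{large} directly via the target threshold $N_{i-1}/(4(k-i))$: either at least $\sqrt{k-i}$ active clusters are large, so their union carries mass at least $N_{i-1}/(4\sqrt{k-i})$ and $\tilde O(\sqrt{k-i})$ uniform samples hit it, or the small clusters total fewer than $N_{i-1}/4$ points and pigeonhole gives one cluster with at least $3N_{i-1}/(4\sqrt{k-i}) \ge N_{i-1}/(4(k-i))$ points; in particular, your intermediate inequality $N/k \ge N/(2m)$ is not needed (and indeed fails when $i > k/2$), since the exact counts certify robustness against the threshold $N_{i-1}/(4(k-i))$ directly.
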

\begin{proof}
We consider two possible cases based on the value of $k-i$. If $(k-i) \le N_{i-1}^{2/3}$, set $|S^{\ell}_i| = C (k-i) \log L \log (n/\delta)$, for some sufficiently large constant $C > 0$. Then, using a simple Chernoff bound argument, it is easy to verify that with probability at least $1-\frac{\delta}{\log_{(1+\eps)} (\rho)\cdot n^{4}}$, our sampling process satisfies the following two conditions:
\begin{itemize}
    \item If an optimal active cluster is small, i.e., $N^j < \frac{N_{i-1}}{4(k-i)}$, then $|S^{\ell}_i \cap \cluster^*_j| < \frac{|S^{\ell}_i|}{3(k-i)}$.
    \item If an optimal active cluster is large, i.e., $N^j \ge \frac{N_{i-1}}{2(k-i)}$, then $|S^{\ell}_i \cap \cluster^*_j| \ge \frac{|S^{\ell}_i|}{3(k-i)}$.
\end{itemize}
Since there exists an optimal cluster with at least $\frac{N_{i-1}}{(k-i)}$ unclustered points, if the above conditions hold, then the cluster with the highest count provides a sample point that is $(\ell^*, \frac{N_{i-1}}{4(k-i)})$-robust. The cost of this process is bounded by $\tilde O(|S^{\ell}_i|^2) = \tilde O((k-i)^2) = \tilde O(\sqrt{k} N_{i-1})$, by the assumption on $k$.

In the other case, $(k-i) \ge N_{i-1}^{2/3}$, we use a lower sampling probability and show that we can still find a robust center. Let us set $|S^{\ell}_i| = C \sqrt{(k-i)} \log L \log (n/\delta)$. We compute exactly for each $q \in S^{\ell}_i$ the value of $|B(q,2(1+\eps)^{\ell^*+1}) \cap Y|$ and select the point with the highest count. To show that we find a robust center, we proceed with a case distinction. If there are at least $\sqrt{(k-i)}$ clusters of size at least $N_{i_1}/4 (k-i)$, through a similar Chernoff bound argument, we are ensured that there is a sample point from one such clusters which provides the desired robust center. Otherwise, the smallest $(k-i) - \sqrt{(k-i)}$ clusters contain at most $N_{i-1}/4$ points out of $N_{i-1}$. This means that the largest cluster has at least $N_{i-1}/2\sqrt{(k-i)}$ points, and, through an analogous Chernoff bound argument, we can prove that one such point will be clustered. The cost of this process is bounded by $\tilde O(|S^{\ell}_i| N_{i-1}) = \tilde O(\sqrt{(k-i)} N_{i-1}) = \tilde O((k-i)^2)$.  

By a union bound over the at most $k$ clusters and $k$ iterations per guess $\ell \in [\guesses]$, the probability that none of them fail, is at least $1 - \frac{k^2 \cdot \guesses}{\log_{(1+\eps)} (\rho)\cdot n^{4}} \ge 1- \frac{\delta}{2n^2}$, since $k \le n$. Further, by a final union bound on the number of timesteps $t \in [n]$, the probability of failure is at most $\delta$. Finally, we remark that the algorithm does not need to know $n$ in advance; the same guarantee can be obtained by replacing $n$ with $t$, as in the diameter case. We also note that the failure of this randomized process cannot be affected by an adversary.
\end{proof}

\begin{lemma}
    \label{lem:closeness:init}
    After we invoke Algorithm {\init($P, d, k, \eps$)}, each center $c^{\ell^*}_i$ is $(\ell^*,  |\mathcal{AC}_{i-1}^{\ell^*}|/4(k-i))$-robust and both of our invariants hold.
\end{lemma}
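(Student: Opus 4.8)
The plan is to trace the single execution of \cref{6} that \cref{5} launches for the \emph{good} guess $\ell^*$ with $(1+\eps)^{\ell^*}\le\opt(P)<(1+\eps)^{\ell^*+1}$, and to prove the robustness of every center together with the two invariants by one induction on the iteration index $i$. The statement I would carry through the induction is: iterations $1,\dots,i$ all complete without triggering the early-termination test of \cref{6}; after a suitable relabeling of the optimal clusters, $c^{\ell^*}_t$ lies in $\cluster^*_t$ for every $t\le i$ (so the closeness invariant holds for $t\le i$); the clusters $\cluster^*_1,\dots,\cluster^*_i$ are pairwise distinct and, for each $t\le i$, $\cluster^*_t\subseteq\mathcal{IN}^{\ell^*}_t$; the active set $Y$ at the start of iteration $t+1$ is disjoint from $B(c^{\ell^*}_t,4(1+\eps)^{\ell^*+1})$; and $c^{\ell^*}_t$ is $(\ell^*,N_{t-1}/4(k-t))$-robust, where $N_{t-1}=|\mathcal{AC}^{\ell^*}_{t-1}|$. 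The base case $i=0$ is vacuous, since $\mathcal{AC}^{\ell^*}_0=P^{\ell^*}$ and $\mathcal{IN}^{\ell^*}_0=\emptyset$.

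For the step from $i-1$ to $i$, I would first use the induction hypothesis to verify the density precondition of \cref{lem:weak:bound}: since $\cluster^*_1,\dots,\cluster^*_{i-1}$ are distinct and lie in $\mathcal{IN}^{\ell^*}_{i-1}$, the $N_{i-1}$ active points belong to at most $k-i+1$ active optimal clusters, so by pigeonhole one of them contains $\Omega(N_{i-1}/k)$ active points. Applying \cref{lem:weak:bound} — whose failure probability is already charged there, over all timesteps, guesses, and iterations — gives that the center $c^{\ell^*}_i$ selected by \cref{6} is $(\ell^*,N_{i-1}/4(k-i))$-robust; in particular its coverage count is at least $N_{i-1}/4(k-i)=|Y|/4(k-i)\ge(1-\eps)|Y|/4(k-i)$, so the break test of \cref{6} does not fire and iteration $i$ runs to completion. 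Ranging over $i$, this also shows that for the good guess $Y$ empties within at most $k$ iterations, so $\ell^*$ is correctly recognized as good. Since $c^{\ell^*}_i$ is a sampled point it lies in $Y\subseteq P^{\ell^*}$, so robustness with respect to $P^{\ell^*}$ follows at once from \cref{def:robustcenter}; this is the robustness assertion of the lemma.

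It remains to extend the invariants to index $i$. Because $c^{\ell^*}_i\in P^{\ell^*}$, it belongs to exactly one optimal cluster, and that cluster is none of $\cluster^*_1,\dots,\cluster^*_{i-1}$: the point $c^{\ell^*}_i$ is drawn from the set $Y$ at the start of iteration $i$, which by the induction hypothesis avoids each $B(c^{\ell^*}_t,4(1+\eps)^{\ell^*+1})$, hence $\dist(c^{\ell^*}_i,c^{\ell^*}_t)>4(1+\eps)^{\ell^*+1}>2\opt(P)\ge\diam(\cluster^*_t)$ for every $t<i$. Relabel that cluster as $\cluster^*_i$; then closeness for index $i$ is immediate, as $\dist(c^{\ell^*}_i,\cluster^*_i)=0\le 2(1+\eps)\opt(P)$ (equivalently, $c^{\ell^*}_i$ is a $(\ell^*,1)$-robust center of $\cluster^*_i$). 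Moreover $\diam(\cluster^*_i)\le 2\opt(P)<4(1+\eps)^{\ell^*+1}$ and $c^{\ell^*}_i\in\cluster^*_i$, so every active point of $\cluster^*_i$ lies in $B(c^{\ell^*}_i,4(1+\eps)^{\ell^*+1})\cap Y$, which \cref{6} deletes at the end of iteration $i$; thus $\cluster^*_i\subseteq\mathcal{IN}^{\ell^*}_i$, extending the hypothesis. Finally, after this deletion $Y$ is disjoint from $B(c^{\ell^*}_i,4(1+\eps)^{\ell^*+1})$ and every later center $c^{\ell^*}_j$, $j>i$, is taken from a subset of this $Y$; using $\opt(P)<(1+\eps)^{\ell^*+1}\le(1+\eps)\opt(P)$ to convert the radius yields the disjointedness invariant in the stated form. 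This closes the induction, and taking $i$ up to $f$ proves the lemma.

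The step I expect to be the real obstacle is the apparent circularity: \cref{lem:weak:bound} at iteration $i$ requires the active mass to be concentrated in few optimal clusters, and that concentration is exactly what the closeness and disjointedness of iterations $1,\dots,i-1$ deliver — so the three conclusions cannot be proved in sequence but must be carried together through the induction. The rest is bookkeeping: checking that the coverage count from \cref{lem:weak:bound} clears the early-termination threshold, upgrading robustness ``with respect to $Y$'' to robustness ``with respect to $P^{\ell^*}$'' as in \cref{def:robustcenter}, and absorbing the harmless $(k-i)$ versus $(k-i+1)$ and $\opt(P)$ versus $(1+\eps)^{\ell^*+1}$ slack into constants, just as \cref{lem:weak:bound} and \cref{lem:k_center:apx} already do.
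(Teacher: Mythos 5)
Your proof is correct and follows essentially the same route as the paper: an induction over the iterations of \textsc{Clustering} for the good guess $\ell^*$, using \cref{lem:weak:bound} to obtain the $(\ell^*, N_{i-1}/4(k-i))$-robust center and the diameter bound $\diam(\cluster^*_i)\le 2\opt(P)$ (the content of \cref{lem:sample:covers}) to deactivate the relabeled optimal cluster, with disjointedness following from the removal of the $4(1+\eps)^{\ell^*+1}$-balls. You additionally spell out details the paper leaves implicit (that the relabeled cluster is distinct from earlier ones and that the early-termination test cannot fire for the good guess), which only strengthens the same argument.
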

\begin{proof}
Let us consider the iterations of Algorithm {\dynamickcenter} for the guess $(1+\eps)^{\ell^*}$ such that $(1+\eps)^{\ell^*} \le \opt(P) < (1+\eps)^{\ell^*+1}$. We proceed by induction.

For $i = 1$ and $\mathcal{AC}^{\ell^*}_{0} = P^{\ell^*}$, by \cref{lem:weak:bound}, we find an $(\ell^*, n/4k)$-robust center, which we can label $\cluster^*_1$ without loss of generality. 
 Let $\cluster^{\ell^*}_1$ denote the set of points in $B(c^{\ell^*}_1, 4(1+\eps)^{\ell^*+1}) \cap \mathcal{AC}^{\ell^*}_{0}$. 
According to \cref{lem:sample:covers}, we have $(\cluster^*_1 \cap \mathcal{AC}^{\ell^*}_{0}) = \cluster^*_1 \subseteq \cluster^{\ell^*}_1$. Therefore, the optimal cluster $\cluster^*_1$, which was active before iteration $i$, becomes inactive in this iteration. The disjointedness invariant is vacuous for $i = 1$.

For $i > 1$, assume that both invariants holds for the clusters found so far. Since all points within distance $4(1+\eps)^{\ell^*+1}$ from the previous centers have been removed, the disjointedness invariant continues to hold. There must exist one active optimal cluster with at least $\frac{|\mathcal{AC}^{\ell^*}_{i-1}|}{4(k - i)}$ points. (If no such cluster exists, the process terminates and nothing more needs to be proven.)
By \cref{lem:weak:bound}, we obtain an $(\ell^*, |\mathcal{AC}^{\ell^*}_{i-1}|/4(k-i)$-robust center, which we can label by $\cluster^*_i$.
Define $\cluster^{\ell^*}_i$ as $B(c^{\ell^*}_i, 4(1+\eps)^{\ell^*+1}) \cap \mathcal{AC}^{\ell^*}_{i-1}$. 
Then, by \cref{lem:sample:covers}, we have $(\cluster^*_i \cap \mathcal{AC}^{\ell^*}_{i-1}) \subseteq \cluster^{\ell^*}_i$. Thus, the optimal cluster $\cluster^*_i$, which was active before iteration $i$, becomes inactive at this iteration. The disjointedness invariant also holds since we filter all points within the desired radius.
\end{proof}

Next, we prove that our invariants are respected at any time $t$, upon insertion or deletion of an arbitrary point $p$ after invoking Algorithms {\dynamickcenterinsert} or {\dynamickcenterdelete}, respectively.

\begin{lemma}
    \label{lem:closeness:insert}
    Suppose both of our invariants hold before time $t$. 
    If an arbitrary point $p$ is inserted or deleted at time $t$, both invariants will also hold after invoking Algorithm {\dynamickcenterinsert} or {\dynamickcenterdelete}.
\end{lemma}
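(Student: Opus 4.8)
The plan is to fix, for the point set $P_t$ obtained after the update at time $t$, the ``correct'' guess $\ell^*$ with $(1+\eps)^{\ell^*}\le \opt(P_t)<(1+\eps)^{\ell^*+1}$, and to verify both invariants for the clusters $\cluster^{\ell^*}_1,\dots,\cluster^{\ell^*}_f$ maintained by the copy for that guess. Since $2(1+\eps)^{\ell^*+1}\le 2(1+\eps)\opt(P_t)$ and $4(1+\eps)^{\ell^*+1}\le 4(1+\eps)\opt(P_t)$, it is enough to argue everything in terms of the radii $2(1+\eps)^{\ell^*+1}$ and $4(1+\eps)^{\ell^*+1}$ that the algorithm actually uses. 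The proof will be a case analysis on the type of update, on whether the corresponding copy rebuilds or merely updates its counters, and on whether the update moves $\opt(P_t)$ into a new bucket. The unifying observation I would isolate first is that every modification the pseudocode makes to a copy---adding a point to an existing cluster, opening a fresh cluster, rebuilding a suffix $\cluster^{\ell^*}_i,\dots,\cluster^{\ell^*}_f$, or resuming a suspended clustering on the remainder set---is an invocation of {\dynamickcenter} with some start index $i$ and some prefix of clusters held fixed, hence already covered by the inductive step analyzed in \cref{lem:closeness:init} (which in turn uses \cref{lem:weak:bound} to produce $(\ell^*,\cdot)$-robust centers and the disjointedness filtering built into {\dynamickcenter}). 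So the only thing to check by hand at each call site is that the held-fixed prefix still satisfies both invariants at that moment.

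For an \textbf{insertion} with $\ell^*$ unchanged: if the new point $p$ lands in $B(c^{\ell^*}_i,4(1+\eps)^{\ell^*+1})$ for the smallest such $i$, then no center moves, no point leaves, and $\opt$ does not decrease, so disjointedness holds verbatim and the point certifying closeness for each $c^{\ell^*}_j$ before the update still survives; the counter $m^{\ell^*}_i$ is, if anything, incremented, which only strengthens robustness. If $p$ is uncovered and fewer than $k$ clusters are open, {\dynamickcenterinsert} re-invokes {\dynamickcenter} with the next start index and the current clusters fixed, which by the first paragraph preserves the invariants using that the fixed prefix was valid before. If $k$ clusters are already open, $p$ merely joins $Q^{\ell^*}$, which touches only the ``too-low'' certificate and no cluster invariant. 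Finally, if the insertion pushes $\opt$ up into a higher bucket, the new relevant copy is one previously suspended as ``too low''; its state is that of a partially-run {\dynamickcenter}, so continuing it is exactly the resume case handled below.

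For a \textbf{deletion} with $\ell^*$ unchanged: $\opt$ can only shrink, so the disjointedness inclusions---which only weaken as the radius shrinks---are automatically preserved for untouched centers, and $2(1+\eps)^{\ell^*+1}$ still respects the closeness bound. If the deleted point $p$ belonged to $\cluster^{\ell^*}_i$ with $p\in B(c^{\ell^*}_i,2(1+\eps)^{\ell^*+1})$, the counter $m^{\ell^*}_i$ drops by one. While $m^{\ell^*}_i>0$, the center $c^{\ell^*}_i$ still has a surviving point within $2(1+\eps)^{\ell^*+1}$, so closeness persists; when $m^{\ell^*}_i$ hits $0$, {\dynamickcenterdelete} rebuilds $\cluster^{\ell^*}_i,\dots,\cluster^{\ell^*}_f$ from $\cup_{j\ge i}\cluster^{\ell^*}_j$ with start index $i$, and here the fixed prefix $\cluster^{\ell^*}_1,\dots,\cluster^{\ell^*}_{i-1}$ is untouched (and did not contain $p$), hence still valid, so the first paragraph closes the case. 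If $p$ was in $Q^{\ell^*}$, it is removed and {\dynamickcenter} resumes at the next unused index on $Q^{\ell^*}$ with all existing clusters fixed---again covered by the first paragraph. A deletion that moves $\opt$ into a lower bucket reduces to the same rebuild/resume analysis for the new $\ell^*$.

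The step I expect to be the main obstacle is reconciling the \emph{closeness} invariant with the fact that the optimal clustering $\cluster^*$ is a moving target: after a single update the optimal centers and their induced partition can change entirely, so one cannot carry a fixed labeling of $\cluster^*_1,\dots,\cluster^*_k$ across time. My plan is to avoid a persistent labeling: at each moment I would re-apply the ordering-of-optimal-clusters argument set up before \cref{lem:closeness:init} together with \cref{lem:sample:covers}, so that whenever a center $c^{\ell^*}_i$ retains a surviving point within $2(1+\eps)^{\ell^*+1}$ (guaranteed by $m^{\ell^*}_i\ge 1$, or re-established by the rebuild), that point, being within $\opt(P_t)$ of some optimal center, exhibits a valid relabeling witnessing $\dist(c^{\ell^*}_i,\cluster^*_i)\le 2(1+\eps)\opt(P_t)$; disjointedness ensures distinct centers are relabeled to distinct optimal clusters, as in \cref{lem:k_center:apx}. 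Once this relabeling principle is in place, the remaining bookkeeping---which counters change and which copies rebuild or resume---is mechanical from the pseudocode of {\dynamickcenterinsert} (\cref{7}) and {\dynamickcenterdelete} (\cref{8}).
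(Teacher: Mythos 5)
Your proposal is correct and takes essentially the same route as the paper's (much terser) proof: a case analysis on insertion versus deletion, using the counters $m^{\ell}_i$ as surviving closeness witnesses, noting that deletions cannot hurt disjointedness and that insertions assigned to the smallest covering cluster preserve it, and re-establishing the invariants for the suffix $\cluster^{\ell}_i,\dots,\cluster^{\ell}_k$ by the rebuild call to \dynamickcenter{} with the still-valid prefix held fixed, exactly as in \cref{lem:closeness:init}. The additional care you take — working at the algorithmic radii, re-witnessing closeness against the current optimal clustering via a relabeling justified by disjointedness, and tracking the change of the relevant guess $\ell^*$ — fills in details the paper leaves implicit but does not change the underlying argument.
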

\begin{proof}
Let us consider the disjointedness invariant first. The deletion of a point cannot violate it. For the case of insertion, we assign the newly inserted point to the lexicographically smallest cluster that contains it (if any), which ensures disjointedness. 

For the closeness invariant, the insertion of a point cannot violate it. In the case of a deletion of a point at distance at most $2(1+\eps)\opt(P)$ from $c_i^\ell$, the counter $m^\ell_i$ is updated and, if zero is reached, we recompute subsequent clusters using {\init}. Since the invariant for clusters $i' < i$ are not affected, the execution of {\init} will establish the invariants again for cluster $i$ and subsequent ones.
\end{proof}

\subsection{Runtime Analysis} 
\begin{lemma}
    \label{lem:runtime_clustering}
    The running time of Algorithm {\dynamickcenter}($P, d, \eps, \ell, i$) is $\tilde O(d (k-i)^{1.5} |\mathcal{AC}^{\ell}_i|)$\footnote{$\tilde{O}$ hides the polynomial factors of $\eps^{-1},\log(n),\log \varphi^{-1}$.}.
\end{lemma}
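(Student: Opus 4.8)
The plan is a direct, iteration-by-iteration accounting of the work done by one call to {\dynamickcenter}; no amortization is needed, since the statement concerns a single (re)construction. Write $N := |X|$ for the size of the input point set, which is what the statement denotes $|\mathcal{AC}^\ell_i|$. Every working set $Y$ that arises in \cref{6} is a subset of $X$, so $|Y| \le N$ throughout, and the while-loop of \cref{6} increments its cluster index starting from $i$ and halts as soon as $Y$ becomes empty, the index exceeds $k$, or the sparsity test of line~5 fires; hence it performs at most $k-i+1$ iterations (when $i = k$ there is one iteration, of cost $O(dN)$).

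First I would bound the cost of a single iteration with current index $j$, $i \le j \le k$. It splits into two parts. Part (i): drawing $S^\ell_j$ (line~3) and computing, for each sampled $q$, the count $|B(q, 2(1+\eps)^{\ell+1}) \cap (Y \text{ or } S^\ell_j)|$ needed to choose $c^\ell_j$ via the $\arg\max$ of line~4 and to evaluate the test of line~5; by \cref{lem:weak:bound}, and since each Euclidean distance in $\R^d$ costs $O(d)$, this is $\tilde O\bigl(d \cdot \min\{(k-j)^2,\ \sqrt{k-j}\,|Y|\}\bigr)$. Part (ii): the bookkeeping that follows — forming $\mathfrak{C}^\ell_j = B(c^\ell_j, 4(1+\eps)^{\ell+1}) \cap Y$ (line~8), setting the counters $m^\ell_j, n^\ell_j$ (lines~9--10, reusing the distances from $c^\ell_j$ already computed), appending $c^\ell_j$ to $C^\ell$, and removing $\mathfrak{C}^\ell_j$ from $Y$ (line~11) — each of which is one pass over $Y$ with an $O(d)$-time distance comparison per point, hence $\tilde O(d|Y|)$. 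Bounding $\min\{a,b\} \le b$ on part (i) and then $|Y| \le N$, $k-j \le k-i$, the total cost of iteration $j$ is $\tilde O\bigl(d\sqrt{k-j}\,|Y| + d|Y|\bigr) = \tilde O\bigl(d\sqrt{k-i}\,N\bigr)$.

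Summing over the at most $k-i+1$ iterations then gives $\tilde O\bigl((k-i) \cdot d\sqrt{k-i}\,N\bigr) = \tilde O\bigl(d(k-i)^{1.5}N\bigr)$, which is the claim. (Keeping $\sqrt{k-j}$ inside the sum and using $\sum_{j=i}^{k}\sqrt{k-j} = O((k-i)^{3/2})$ gives the same bound with a slightly tighter constant.) I do not expect a genuine obstacle; the two things that need care are: (a) reading the per-iteration sampling/counting cost off \cref{lem:weak:bound} while keeping the explicit factor $d$ — which the $\tilde O$ of this statement does \emph{not} suppress — and charging uniformly for whichever of the two counting variants of line~4 is actually executed; and (b) using the $\sqrt{k-j}\,|Y|$ side of the minimum rather than the $(k-j)^2$ side, since naively summing $(k-j)^2$ over the iterations would yield $\tilde O((k-i)^3)$, which exceeds the target whenever $N < (k-i)^{3/2}$.
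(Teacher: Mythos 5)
Your proposal is correct and follows essentially the same route as the paper: invoke \cref{lem:weak:bound} (using its $\sqrt{k-j}\,|Y|$ branch, with the explicit factor $d$ for distance computations) to bound each of the at most $k-i$ iterations by $\tilde O\bigl(d\sqrt{k-i}\,|\mathcal{AC}^{\ell}_i|\bigr)$, and sum over the iterations to obtain $\tilde O\bigl(d(k-i)^{1.5}|\mathcal{AC}^{\ell}_i|\bigr)$. Your additional accounting of the per-iteration bookkeeping passes over $Y$ (cluster formation, counters, removal) is a harmless elaboration that the paper's terser proof leaves implicit, since it is dominated by the sampling/counting cost.
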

\begin{proof}
    The algorithm consists of $(k-i)$ iterations for a fixed guess $(1+\eps)^\ell$. Each iteration $j = i, \ldots, k$ takes $\tilde O(d\sqrt{k-j} \mathcal{AC}^{\ell}_j)$ by \cref{lem:weak:bound}. Therefore, the total computation is bounded by $\tilde O(d(k-i)^{1.5} |\mathcal{AC}^{\ell}_i|)$.
\end{proof}

\begin{lemma}
  Consider a sequence $S$ of $n$ possibly adversarial updates, and let $P_t$ be the point set obtained after applying the first $t$ updates. The amortized update time per operation is $\tilde O(d k^{2.5})$. 
\end{lemma}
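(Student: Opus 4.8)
The plan is to split the amortized cost into the cheap per-update bookkeeping and the amortized cost of the only expensive routine, \dynamickcenter, following the same template as the diameter analysis (\cref{thm:runtime}). For the bookkeeping, each call to \dynamickcenterinsert\ or \dynamickcenterdelete\ iterates over the $\guesses = O(\log_{1+\eps}\spread)$ guesses, and for each guess spends $O(dk)$ time to locate the (by disjointedness, unique) cluster that contains the updated point, to update the size counter $n^\ell_i$, and to update the robustness counter $m^\ell_i$; this is $\widetilde{O}(dk)$ per update, well within the claimed bound.

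For the rebuild cost, the key point is that every call to \dynamickcenter\ on an index $i$ at guess $\ell$ is \emph{forced} by enough deletions to pay for itself. By \cref{lem:weak:bound} and \cref{lem:closeness:init}, when cluster $\cluster^\ell_i$ is built its center $c^\ell_i$ is robust, so the counter is initialized to $m^\ell_i = |B(c^\ell_i,2(1+\eps)^{\ell+1}) \cap \mathcal{AC}^\ell_{i-1}| \ge \beta$ with $\beta = \Omega\!\bigl(|\mathcal{AC}^\ell_{i-1}|/(k-i)\bigr)$ (either from the robustness guarantee or from the break condition of \dynamickcenter). Since $m^\ell_i$ only increases on insertions into $B(c^\ell_i,2(1+\eps)^{\ell+1})$ and only decreases on deletions from it, a rebuild from position $i$ — triggered when $m^\ell_i$ hits $0$ — can occur only after at least $\beta$ deletions of points inside that ball since the creation of $c^\ell_i$. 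On the other hand \cref{lem:runtime_clustering} bounds the cost of this rebuild (which re-creates all of clusters $i,i+1,\dots,k$) by $\widetilde{O}\!\bigl(d(k-i)^{1.5}|\mathcal{AC}^\ell_{i-1}|\bigr)$. Charging this equally to the $\ge \beta$ deletions that forced it assigns
\[
\frac{\widetilde{O}\!\bigl(d(k-i)^{1.5}|\mathcal{AC}^\ell_{i-1}|\bigr)}{\Omega\!\bigl(|\mathcal{AC}^\ell_{i-1}|/(k-i)\bigr)} \;=\; \widetilde{O}\!\bigl(d(k-i)^{2.5}\bigr) \;=\; \widetilde{O}(dk^{2.5})
\]
to each such deletion. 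By disjointedness a deleted point lies in the inner ball $B(c^\ell_i,2(1+\eps)^{\ell+1})$ of at most one center per guess, and once a rebuild consumes these deletions the counter is reinitialized from scratch; hence over its (single) lifetime each deleted point is charged $\widetilde{O}(dk^{2.5})$ per guess, i.e.\ $\widetilde{O}(dk^{2.5})$ after absorbing the $\guesses$ copies into $\widetilde{O}$.

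Next I would account for the two remaining sources of calls to \dynamickcenter. First, when an inserted point $p$ is not covered by any current cluster of guess $\ell$ and fewer than $k$ clusters exist, \dynamickcenterinsert\ invokes \dynamickcenter\ on the singleton $\{p\}$, which by \cref{lem:runtime_clustering} costs only $\widetilde{O}(dk^{1.5})$, charged directly to the insertion of $p$; the same $\widetilde{O}(dk^{1.5})$-per-insertion charge absorbs any growth of the active set $\mathcal{AC}^\ell_{i-1}$ between the creation and the rebuild of $c^\ell_i$ beyond a constant factor, since such growth is caused by exactly that many uncovered insertions. Second, a suspended (``too-low'') guess is resumed only when a sample point of $S^\ell$ gains enough neighbors to cross the density threshold in \dynamickcenter's break condition, which is detected by an $O(1)$ counter test; a resume that produces a new cluster yields a center that is again $\Omega\!\bigl(|\mathcal{AC}^\ell_{i-1}|/(k-i)\bigr)$-robust and is amortized exactly as above, and a resume that does not create a cluster is covered by the same threshold-crossing budget.

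The step I expect to be the main obstacle is making this charging scheme airtight in the presence of \emph{cascading} rebuilds: a rebuild from position $i$ re-creates clusters $i,\dots,k$ at once, so one must verify (a) that the single bound of \cref{lem:runtime_clustering} already pays for the whole cascade, and (b) that no deletion is charged by two overlapping rebuilds — for instance, when an earlier position $j<i$ is rebuilt while $m^\ell_i$ is still positive, the current incarnation of $\cluster^\ell_i$ is discarded and the deletions it had consumed are simply not reused, which only helps the upper bound — and likewise that the suspend/resume mechanism never causes repeated expensive re-runs on a large remainder set $Q^\ell$. Granting this, summing over all $n$ updates gives total work $n\cdot\widetilde{O}(dk^{2.5})$, hence an amortized update time of $\widetilde{O}(dk^{2.5})$ per operation.
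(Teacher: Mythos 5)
Your proposal is correct and takes essentially the same route as the paper: you charge each rebuild from index $i$, whose cost is bounded via \cref{lem:runtime_clustering}, to the $\Omega\bigl(|\mathcal{AC}^{\ell}_{i-1}|/(k-i)\bigr)$ inner-ball deletions that the robustness guarantee of \cref{lem:weak:bound,lem:closeness:init} forces before the counter $m^{\ell}_i$ can reach zero, with insertions absorbing the growth of the active set and the uncovered-insertion and suspended-guess cases handled separately --- which is precisely the paper's induction over successive robust centers, phrased as a direct charging scheme. One minor imprecision: growth of $\mathcal{AC}^{\ell}_{i-1}$ is caused by any insertion outside clusters $1,\dots,i-1$ (including points covered by later clusters), not only by uncovered insertions, so a single insertion may absorb growth at up to $k$ indices and thus receive a total charge of $\tilde O(dk^{2.5})$ rather than $\tilde O(dk^{1.5})$ --- still within the claimed amortized bound, and consistent with the paper's final per-point accounting.
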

\begin{proof}
    The total number of guesses $L$ incurs only an extra $\tilde O(1)$ factor, so we can consider an arbitrary $\ell$. Note that insertions are handled in $\tilde O(dk)$ time and that if at any time the size of the point set is $\tilde O(k^{1.25})$, then we can obtain a $2$-approximation by computing all pairwise distances.
    When a point $p$ is deleted, there are three possible cases. 
    
    If $p$ belongs to some cluster $i$ and $p$ is within distance $2(1+\eps)^{\ell+1}$ from its center, then $m_i^\ell$ will be decreased by one. If $m_i^\ell > 1$, then the update time is $\tilde O(dk)$. If $m_i^\ell = 1$, then the $i$-th cluster along with subsequent ones need to be reconstructed. Consider first cluster $i = 1$. We prove the amortized bound by induction on the number of robust centers $w$ that are computed throughout the algorithm for cluster $i = 1$.  For $w=1$, we obtain an $(\ell, |P_t|/4k)$ centerpoint in $\tilde O(d \sqrt{k} |P_t|)$ time as the first robust center and proceed to compute the remaining $k-1$ clusters. The total computation cost of $\tilde O(d k^{1.5} |P_t|)$ can be amortized over the first $|P_t|$ insertions, resulting in $\tilde O(d k^{1.5})$ charge. For the inductive step, we amortize the computation of the $(w+1)$-th robust center at time $t_{w+1}$ over two operations: (1) the decreases that $m_i^{\ell}$ experienced and (2) the updates that occurred between time $t_{w+1}$ and time $t_w$, when the computation of the $w$-th robust center occurred. Since the computation of the $w$-th centerpoint is amortized over updates prior to $t_w$, it does not affect the amortization on subsequent updates. The total computation to amortize is $\tilde O(d k^{1.5} |P_{t_{w+1}}|)$. Now, recall that $t_{w+1} - t_w \ge |P_t|/4k$ by the robust center property, and observe that
    \begin{equation*}
        \frac{|P_{t_{w+1}}|}{t_{w+1} - t_w} \le \frac{|P_{t_w}|}{t_{w+1} - t_w}  + \frac{t_{w+1} - t_w}{t_{w+1} - t_w} = \tilde O(k) + \tilde O(1).
    \end{equation*}
    This means that the points who caused $m_i^{\ell}$ to decrement receive a charge of $\tilde O(d k^{2.5})$ from cluster $i = 1$, while all other points (which did not cause $m_i^{\ell}$ to drop) receive a charge of $\tilde O(d)$ from cluster $i = 1$.
    
    Consider now $i > 1$. The induction proceeds in a similar way. The amortization of the first robust center for the $i$-th cluster is straightforward. For the inductive step, we again amortize the computation of the $(w+1)$-th robust center of cluster $j > 1$ at time $t_{w+1}$ over the decrements as well as updates that occurred between time $t_{w+1}$ and time $t_w$. It is worth noting that deletions of points from clusters $j > i$ cannot affect $m_i^{\ell}$ at all, whereas deletions from clusters $j < i$ can only positively affect $m_i^{\ell}$ since they might reset it without additional amortization charges. Let $N^{t_w}_{i-1}$ be the number of active points at time $t_w$ at the beginning of iteration $i$. The total computation to amortize is $\tilde O(d (k-i)^{1.5} N^{t_{w+1}}_{i-1})$. Now, recall that $t_{w+1} - t_w \ge N^{t_{w}}_{i-1}/4(k-i)$ by the robust center property, and observe that
    \begin{equation*}
        \frac{N^{t_{w+1}}_{w-1}}{t_{w+1} - t_w} \le \frac{N^{t_{w}}}{t_{w+1} - t_w} + \frac{t_{w+1} - t_w}{t_{w+1} - t_w} = \tilde O(k-i) + \tilde O(1).
    \end{equation*}
    Again, the points who caused $m_i^{\ell}$ to decrement receive a charge of $\tilde O(d k^{2.5})$ from some cluster $i > 1$, while all other points (which did not cause $m_i^{\ell}$ to drop) receive a charge of $\tilde O(d)$ from any other cluster $i > 1$. Since each point belongs to a single cluster, it receives one charge of $\tilde O(d k^{2.5})$ from its own cluster, at most $k-1$ charges of $\tilde O(d)$ from other clusters, and at most $k$ charges of $\tilde O(d k^{1.5})$ when a cluster is created for the first time, proving the claimed amortized update time.

    It remains to discuss the case in which $p$ does not belong to any cluster and it is deleted from the remainder set. If the invocation of {\dynamickcenter} creates a new robust center for some $i$, we can apply induction as before. If the new cluster was created for the first time, we can charge the insertions of those points. If the cluster previously had a robust center, we can use the same argument as above. Finally, if the guess is too low and no cluster can be created, there is a one-time charge of $O(k^2)$ for running \cref{lem:weak:bound}, after which the computation is halted.
\end{proof}


\end{document}